%
\documentclass[runningheads]{llncs}
\usepackage{graphicx}
\usepackage{amsfonts}
%

\begin{document}
\title{\vspace{-1cm}Worst-Case Analysis of LPT Scheduling on Small Number of Non-Identical Processors}

\titlerunning{Worst-Case Analysis of LPT Scheduling for Non-Identical Processors}
%
%
\author{
Takuto Mitsunobu\inst{1} \and
Reiji Suda\inst{1} \and Vorapong Suppakitpaisarn \inst{1}}
\authorrunning{T. Mitsunobu et al.}

\institute{\vspace{-0.3cm}The University of Tokyo, Japan}
\maketitle              
\begin{abstract} \vspace{-0.5cm}
The approximation ratio of the longest processing time (LPT) scheduling algorithm has been studied in several papers. While the tight approximation ratio is known for the case when all processors are identical, the ratio is not yet known when the processors have different speeds. In this work, we give a tight approximation ratio for the case when the number of processors is 3,4, and 5. We show that the ratio for those cases are no more than the lower bound provided by Gonzalez, Ibarra, and Sahni (SIAM J. Computing 1977). They are approximately 1.38 for three processors, 1.43 for four processors, and 1.46 for five processors.\\
\textbf{Keywords:} Scheduling, LPT Algorithm, Approximation Ratio
\end{abstract}
\section{Introduction}
 \vspace{-0.2cm}
Theoretical analyses of algorithms for scheduling problems have been conducted in several recent works such as \cite{ghalami2019scheduling,jansen2020approximation} (refer to \cite{suda2006} for a survey). Several computation environments are considered in the works, but we will focus on the following environments in this paper:
\begin{itemize}
    \item \textbf{Offline:} All task information are available before the execution of a scheduling algorithm
    \item \textbf{Makespan minimization:} The scheduling algorithm aims to minimize the computation time needed by the processor that finish later than others.
    \item \textbf{No precedence constraints:} Tasks can be executed in any order. 
    \item \textbf{Uniform processor:} Let the size of task $i$ be $t(i)$ and let the speed of processor $p$ be $s(p)$.  Then, the calculation time of the task at the processor is $t(i)/s(p)$. 
\end{itemize}
The above setting is called as $Q || C_{\textrm{max}}$ in  scheduling literature \cite{suda2006}. The problem is NP-hard even when we have two processors with the same speed \cite{sahni1976algorithms,garey1978strong}. A polynomial time approximation scheme for this setting is known \cite{hochbaum1986polynomial}, but they are practically slow. We believe that it is more common to solve this problem using other scalable algorithms. Among those algorithms, the longest processing time (LPT) algorithm is one of the most well-known. The algorithm can be described as follows:
\begin{enumerate}
    \item Set $w(p) = 0$ for any processor $p$.
    \item Take $i$ as one of the largest tasks that have not yet been assigned to any processor.
    \item Take $p$ as one of processors which minimizes $w(p) + t(i)/s(p)$. 
    \item Assign task $i$ to processor $p$ and set $w(p)$ as $w(p) + t(i)/s(p)$.
    \item Go back to Step 2 until all tasks are assigned.
\end{enumerate}

\subsection{Previous Works}

The approximation ratio of the LPT algorithm is analyzed in several works. Dobson \cite{dobson1984scheduling} gave an instance where the approximation ratio of LPT is $1.512$ and proved the worst-case approximation ratio of LPT is not greater than $19/12 \approx 1.583$. Friesen \cite{friesen1987tighter} gave an instance where the approximation ratio of LPT is 1.52 and proved the worst-case approximation ratio of LPT is not greater than $5/3 \approx 1.667$ by different approach from \cite{dobson1984scheduling}. Later, Kov\'acs \cite{kovacs2010new} gave an instance where the worst-case approximation ratio of LPT is greater than $1.54$ and less than $1 + \sqrt{3}/3 \approx 1.577$.

The ratio for several special cases are also studied in several papers. Those include the case where only one processor has a
different speed \cite{gonzalez1977bounds,kovacs2009tighter}, the case all speeds of processors are a power of two \cite{kovacs2009tighter}, and the case where a ratio of speeds is a parameter \cite{chen1991parametric}.

In this paper, we consider a special case when the number of processors is a parameter. Suppose that the number of processors is $m$. We calculate the approximation algorithm in term of $m$. 
As the number of processors in distributed computation is usually small, we strongly believe that analyzing this special case is very important to understand the nature of the LPT algorithm.

Although its importance, there are only two previous works for this special case. Graham \cite{graham1966bounds,graham1969bounds} show that, when all processors are identical, the tight approximation ratio is $4/3 - 1/(3m)$. Gonzalez, Ibarra, and Sahni \cite{gonzalez1977bounds} show that the ratio is no larger than $2m/(m+1)$. Denote the unique positive root of the equation $2x^m - x^{m-1} - \cdots - x - 2 = 0$ by $\rho_m$. By a calculation, we have $\rho_2 \approx 1.28$, $\rho_3 \approx 1.38$, $\rho_4 \approx 1.43$, and $\rho_5 \approx 1.46$. The authors of \cite{gonzalez1977bounds} also gave a series of instances with $m$ processors where the approximation ratio of LPT is $\rho_m$. In addition, it is shown in the same paper that, when $m = 2$, the tight ratio is $\rho_2$.

\subsection{Our Contributions}

We show that the lower bound given by Gonzalez, Ibarra, and Sahni \cite{gonzalez1977bounds} is tight also for $m = 3,4,5$. In other words, $\rho_3$, $\rho_4$, and $\rho_5$ are tight approximation ratios when the numbers of processors are $3, 4,$ and $5$ respectively. 
Suppose that the number of tasks is $n$.
An informal sketch of our proof is as follows:
\begin{enumerate}
    \item In Section 3, we show that, if the worst approximation ratio is attained when $n = m + 1$, then the worst approximation is $\rho_m$.
    \item In Section 4, we show that the worst ratio is not attained when $n = m + 2$.
    \item In Section 5, we show that, when $m \in \{3,4,5\}$, the worst approximation ratio is attained only when $n \leq m + 2$.
\end{enumerate}

\section{Preliminaries}


\subsection{Notations}

An instance for scheduling problems on uniform processors is defined as a 4-tuple
$I = (m, n, s, t)$ where $m$ is a number of processors, $n$ is a number of tasks, $s:\{1,\dots,m\} \rightarrow \mathbb{R}_+$ is
a function from indices of processors to their speeds, and $t:\{1,\dots,n\} \rightarrow \mathbb{R}_{\geq 0}$ is a function from indices of tasks to their sizes.

For an instance $I$, $OPT(I)$ denotes the optimal makespan and $LPT(I)$ denotes the makespan of the LPT schedule. $\rho_I = LPT(I)/OPT(I)$ is the approximation ratio. $A_I: \{1,\dots,n\} \rightarrow \{1, \dots,m\}$ is the assignment function by the LPT schedule. For all $p$, $T_I(p) =
\{i|A_I(i) = p\}$ is the set of tasks assigned to processor $p$ by the LPT schedule, and $w_I(p) = \sum_{i \in T_I(p)} t(i)$ is the workload of $p$ in the
 schedule. Similarly, $A^*_I: \{1,\dots,n\} \rightarrow \{1, \dots,m\}$ is the assignment function by an optimal schedule, $T^*_I(p) = \{i|A^*_I(i) = p\}$, and $w^*_I(p) = \sum_{i \in T^*_I(p)} t(i)$. In addition, $T'_I(p) = T^*_I(p) \backslash \{n\}$ and $w'_I(p) = \sum_{i \in T'_I(p)} t(i)$. It is quite important to consider the LPT result without the final task $n$ in our proof. The set $T'(p)$ and the workload $w_I'(p)$ will play an important role there.

Without loss of generality, we can assume that $OPT(I) = 1$, $s(1) \geq \dots \geq s(m)$ and $t(1) \geq \dots \geq t(n) = 1$ \cite{kovacs2010new}. 
We also know that there is an optimal assignment such that $w^*(1) \geq \dots \geq w^*(m)$ \cite{kovacs2010new}. 
We assume that $A^*$ is such a solution.  

Consider the equation $2x^m - x^{m-1} - \cdots - x - 2 = 0$. We know that there is only one positive solution by the Descartes’ rule of signs. Denote that solution by $\rho_m$. We know that $\rho_m$ is strictly increasing, i.e. $\rho_m > \rho_{m'}$ for $m > m'$. Gonzalez, Ibarra, and Sahni  has proved the following result in \cite{gonzalez1977bounds}.
\begin{lemma} \label{lem1}
There is an instance $I$ with $m$ processors and $m + 1$ tasks such that $\rho_I = \rho_m$. Moreover, none of  instance $I$ with $2$ processors has $\rho_I > \rho_2$.
\end{lemma}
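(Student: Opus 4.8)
The lemma splits into two independent claims, which I would prove separately.

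\textbf{(a) Existence of a tight instance.} I would exhibit it explicitly. Put $\rho=\rho_m$, define $m$ speeds by $s_1=2$ and $s_j=\rho s_{j-1}-1$ for $j=2,\dots,m$, and take $m+1$ task sizes $t_j=s_{j+1}$ for $j=1,\dots,m-1$ and $t_m=t_{m+1}=1$; equivalently $t_j=\rho s_j-1$ for every $j\le m$. The one nontrivial point is that the ``closure'' identity $s_1=\rho s_m$ -- equivalently $s_m=2/\rho$, equivalently $\rho s_m-1=1$ -- is exactly the equation $2\rho^m-\rho^{m-1}-\cdots-\rho-2=0$. Since the polynomial $2x^m-x^{m-1}-\cdots-x-2$ takes the value $1$ at $x=3/2$, we have $\rho_m<3/2$, and then a short induction gives $2=s_1>s_2>\cdots>s_m>1/\rho$ and $t_1>\cdots>t_{m-1}>t_m=t_{m+1}=1$, so speeds and sizes are in the required order. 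To verify the instance I would check two things. First, LPT assigns task $j$ to processor $j$ for $j=1,\dots,m$: by induction, when task $j$ arrives, processors $1,\dots,j-1$ have completion time $\rho-1/s_i$ and processors $j,\dots,m$ are idle, and putting task $j$ on processor $i$ gives completion time $\rho+(\rho s_j-2)/s_i$ if $i<j$, $\rho-1/s_j$ if $i=j$, and $(\rho s_j-1)/s_i$ if $i>j$; since $\rho s_j\ge 2$ and $s_i<s_j$ for $i>j$, processor $j$ is the strict minimizer. After these $m$ steps every processor has completion time $\rho-1/s_j$, so the last unit task can be appended to any one of them to reach completion time exactly $\rho$; hence $LPT(I)=\rho$ for every tie-breaking. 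Second, the ``rotated'' schedule that puts the two unit tasks $m,m+1$ on processor $1$ and task $j-1$ on processor $j$ for $j=2,\dots,m$ makes every completion time equal $1$ (using $s_1=2$ and $t_{j-1}=s_j$), so $OPT(I)\le1$; and since the total work $\sum_i t(i)$ equals the total capacity $\sum_j s(j)$, every schedule has makespan at least $1$. Therefore $OPT(I)=1$ and $\rho_I=\rho_m$.

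\textbf{(b) No $2$-processor instance beats $\rho_2$.} Here I would argue by contradiction. Suppose $\rho_I>\rho_2$ on two processors; normalize (by scaling sizes and speeds) so that $OPT(I)=1$, $t_n=1$, $s_1\ge s_2$, $t_1\ge\cdots\ge t_n$, and note $s_1\ge1$. First reduce: if the makespan processor $q$ has last-assigned task $k<n$, then tasks $k+1,\dots,n$ all sit on the other processor in the LPT run, so deleting them cannot decrease $LPT(I)$ or increase $OPT(I)$; after renormalizing we may assume $q$'s last LPT task is task $n$, of size $1$. Applying the LPT selection rule to task $n$, and using that the other processor's completion time is at most $L:=LPT(I)$ and that $w_I(1)+w_I(2)\le s_1+s_2$, I obtain the master bound $L(s_1+s_2)\le s_1+s_2+1$. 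This already gives $L\le\rho_2$ once $s_1+s_2$ is moderately large, and it disposes of the case in which $q$ carries only one task. In the complementary regime -- $s_1+s_2$ small, and $q$ carrying at least two tasks, each of size $\ge1$, so $L\ge 2/s_q$ -- I would branch on whether $q$ is the fast or the slow processor and on how many tasks the other processor carries, and in each branch combine $L\ge 2/s_q$ with feasibility of an optimal schedule (every task fits in time $1$; both processors underloaded in the optimum). Each branch, in its extremal configuration, reduces to the quadratic $2x^2-x-2=0$, forcing $L\le\rho_2$; the extremal configuration is exactly the $m=2$ instance of part (a).

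I expect part (a) to be essentially bookkeeping, with the LPT trace and the $\rho_m<3/2$ bound the only places demanding a little care. The main obstacle is the small-$(s_1+s_2)$ regime of (b): there the master bound alone is too weak, and one must use simultaneously the ``two-tasks'' lower bound on the critical completion time and the feasibility of the optimum. That is also where the precise constant $\rho_2$ (the root of $2x^2-x-2$) genuinely enters, and I expect the handful of subcases there to be the bulk of the argument.
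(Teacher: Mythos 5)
The paper does not prove this lemma; it is quoted from Gonzalez, Ibarra, and Sahni \cite{gonzalez1977bounds}, so there is no in-paper proof to compare against. Judged on its own terms, part~(a) of your write-up is a correct and complete re-derivation. The construction ($s_1=2$, $s_{j+1}=\rho_m s_j-1$, $t_j=s_{j+1}$ for $j<m$, two unit tasks) is exactly the tight family from the cited paper; your verification that the polynomial identity for $\rho_m$ is the closure condition $\rho_m s_m=2$, that $\mathcal{P}_m(3/2)=1$ so $\rho_m<3/2$ and hence the $s_j$ and $t_j$ are strictly decreasing, that processor $j$ is the strict LPT minimizer for task $j$ (using $\rho_m s_j\ge 2$ against slots $i<j$ and $s_i<s_j$ against slots $i>j$), and that $OPT(I)=1$ via the rotated schedule together with total work $=$ total speed, all check out.

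Part~(b), however, is a plan rather than a proof, and you say so yourself. Two concrete soft spots. First, the master bound $L(s_1+s_2)\le s_1+s_2+1$ does not, on its own, dispose of the case $|T_I(q)|=1$: for $q=2$ you additionally need that $s_2<1$ (from $L=1/s_2>\rho_2>1$) forces processor $2$ to be idle in every optimal schedule since all task sizes are $\ge 1$, whence $\sum_i t(i)\le s_1$, and then the LPT selection rule for task $n$ gives $L=1/s_2\le(\sum_{i<n}t(i)+1)/s_1=\sum_i t(i)/s_1\le 1<\rho_2$; for $q=1$, the LPT rule puts task $1$ on processor $1$ first, so $|T_I(1)|=1$ forces $n=1$ and $\rho_I=1$. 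Second, in the regime $|T_I(q)|\ge 2$ with $s_1+s_2$ small you only name the intended case split on which processor is critical and how many tasks the other carries, without carrying out any branch; that is where the actual content of the $m=2$ result lies, and until those cases are written the second half of the lemma is not established.
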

We can follow an example given in \cite{friesen1987tighter} to show that when $m = 94$, the lower bound given by the above lemma is not tight. There is an instance $I$ with 94 processors such that $\rho_I > \rho_{94}$.  

\subsection{Minimality and Domination}
\begin{definition}[Minimality \cite{friesen1987tighter,kovacs2009tighter}]
An instance $I = (m, n, s, t)$ is minimal if 
$\rho_J < \rho_I$ for any instance 
$J = (m' , n' , s' , t')$ such that $m' \leq m$, $n' \leq n$, and $(m' , n')\neq (m, n)$.
\end{definition}
Intuitively, $I$ is minimal if $I$ is the smallest instance which can give a particular approximation ratio. During our proof, we will use the following lemma which is mentioned in \cite{kovacs2010new}.
\begin{lemma} \label{sec21}
If $I$ is minimal, then, for any processor $p$,
$(w_I'(p) + 1)/s(p) \geq \rho_I.$
\end{lemma}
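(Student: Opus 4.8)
The plan is to deduce the inequality from a single structural fact about minimal instances --- that the LPT makespan is realised on the machine carrying the \emph{last} scheduled task --- together with the greedy rule that governed the placement of that task. Throughout write $p^{*}=A_I(n)$ for the machine to which LPT assigns task $n$, and recall $t(n)=1$ and $OPT(I)=1$, so $\rho_I=LPT(I)$. Observe first that $w'_I(p)/s(p)$ is precisely the completion time of $p$ at the instant just before LPT places task $n$: for $p\neq p^{*}$, task $n$ does not land on $p$ and no task follows it, so $p$'s completion time is already its final value $w_I(p)/s(p)$; for $p=p^{*}$ it equals $w_I(p^{*})/s(p^{*})-1/s(p^{*})$, since $n$ is the last task processed and it goes on $p^{*}$.

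The first and only substantive step is to show $LPT(I)=w_I(p^{*})/s(p^{*})$. Suppose not, so the makespan is attained on some machine $q\neq p^{*}$, and let $J=(m,n-1,s,t|_{\{1,\dots,n-1\}})$ be the instance obtained from $I$ by deleting task $n$. Since LPT processes tasks in nonincreasing size order and $n$, being smallest, is processed last, LPT on $J$ reproduces its behaviour on $I$ through the first $n-1$ tasks; in particular $w_J(q)=w_I(q)$, so $LPT(J)\ge w_I(q)/s(q)=LPT(I)$. Deleting a task cannot raise any completion time, so also $LPT(J)\le LPT(I)$, whence $LPT(J)=LPT(I)$. As $OPT(J)\le OPT(I)=1$ we obtain $\rho_J\ge LPT(J)=\rho_I$, contradicting the minimality of $I$ (note $(m,n-1)\neq(m,n)$; a minimal instance has $\rho_I>1$ and hence $n\ge 2$, so $OPT(J)>0$).

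Granting this, the two cases of the lemma are immediate. If $p=p^{*}$ then $w'_I(p^{*})+1=w_I(p^{*})$, so $(w'_I(p^{*})+1)/s(p^{*})=w_I(p^{*})/s(p^{*})=LPT(I)=\rho_I$. If $p\neq p^{*}$, let $c(\cdot)$ denote completion times at the instant just before task $n$ is placed; the rule that made LPT select $p^{*}$ for task $n$ gives $c(p^{*})+1/s(p^{*})\le c(p)+1/s(p)$. The left-hand side is the completion time of $p^{*}$ after task $n$, namely $w_I(p^{*})/s(p^{*})=\rho_I$, while $c(p)\le w_I(p)/s(p)=w'_I(p)/s(p)$; hence $\rho_I\le (w'_I(p)+1)/s(p)$.

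The whole argument is short, and the only place where anything can go wrong is the deletion step, so that is where I would be careful: checking that LPT's run on $J$ really is the restriction of its run on $I$ (which uses that task $n$, being smallest, is processed last and that deleting it does not perturb the processing order), that the weak inequality $\rho_J\ge\rho_I$ already contradicts minimality (it does, since minimality demands the strict inequality $\rho_J<\rho_I$), and that the few degenerate instances with very small $n$ are covered. After that, the two short chains of inequalities are pure bookkeeping with the defining property of LPT's greedy choice.
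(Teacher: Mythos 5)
The paper does not prove this lemma itself but cites Kov\'acs (2010) for it, so there is no in-paper argument to compare against. Your proof is correct and is the standard one: a minimal instance must attain its LPT makespan on the machine $p^{*}=A_I(n)$ receiving the last, smallest task (otherwise deleting task $n$ leaves the LPT run on the remaining tasks unchanged, so $LPT(J)\geq LPT(I)$ while $OPT(J)\leq OPT(I)$, giving $\rho_J\geq\rho_I$ and contradicting minimality), and then the LPT greedy rule for placing task $n$ gives $\rho_I = w_I(p^{*})/s(p^{*}) = c(p^{*})+1/s(p^{*}) \leq c(p)+1/s(p) = (w'_I(p)+1)/s(p)$ for every $p$, with equality at $p^{*}$.
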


Next, we define domination. The concept of domination was originally in \cite{coffman1978application}. 

\begin{definition}
In an instance $I$, a processor $p$ dominates a processor $q$ if $s(p) \leq s(q)$ and there is a function $f: T^*_I(q) \rightarrow T'_I(p)$ such that $t(i) \geq \sum_{f(j) = i} t(j)$ for any $i \in T'_I(p)$. Note that a processor may dominate itself.
\end{definition}

The following lemma, called the principle of domination, discusses the relationship between minimality and domination. 

\begin{lemma}[Principle of Domination]
If an instance I is minimal, no processor dominates any processor.
\end{lemma}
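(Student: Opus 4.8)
The plan is to argue by contradiction: suppose $I=(m,n,s,t)$ is minimal but some processor $p$ dominates some processor $q$, witnessed by a map $f:T^*_I(q)\to T'_I(p)$ with $t(i)\ge \sum_{f(j)=i} t(j)$ for every $i\in T'_I(p)$ and $s(p)\le s(q)$. From this I would construct a strictly smaller instance $J$ (fewer tasks, or the same tasks on fewer processors, or at least a genuinely smaller pair $(m',n')$) whose approximation ratio is at least $\rho_I$, contradicting minimality. The natural candidate for $J$ is obtained by \emph{deleting processor $q$} together with the tasks it is assigned in the optimal schedule, i.e. remove the tasks in $T^*_I(q)$ and the processor $q$; this yields $m'=m-1$ and $n'=n-|T^*_I(q)|$. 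I then need to show that on $J$ the LPT makespan does not decrease relative to the optimal makespan.

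The key steps, in order, are: (1) Observe that in the optimal schedule for $I$, simply forgetting processor $q$ and the jobs $T^*_I(q)$ gives a feasible schedule for $J$ with makespan $\le OPT(I)=1$, so $OPT(J)\le 1$. (2) Show the LPT run on $J$ behaves, on the remaining processors, "no better" than on $I$: here the domination hypothesis does the work. Intuitively, whatever load $q$ carried \emph{without the last task}, namely $w'_I(q)$, was small enough to be "absorbed" job-for-job into $T'_I(p)$ by $f$; so removing $q$ and the optimal jobs it held cannot free up capacity that LPT would exploit to beat its makespan on $I$. One makes this precise by an exchange/tracking argument showing that the LPT schedule for $J$ assigns, to each surviving processor, a workload at least as large as the corresponding $w'_I(\cdot)$ before the final task is placed; combined with Lemma~\ref{sec21} applied to $I$ this keeps the completion time of some processor at least $\rho_I\cdot OPT(J)$. (3) Conclude $\rho_J\ge LPT(J)/OPT(J)\ge \rho_I$, and since $(m',n')\neq(m,n)$ with $m'\le m$, $n'\le n$, this contradicts minimality of $I$.

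The main obstacle is step (2): LPT is sensitive to the exact multiset of job sizes and to tie-breaking, so removing the jobs of $T^*_I(q)$ could in principle change \emph{which} processor receives a given job and in what order, and one must argue that the domination condition forces the new LPT schedule to still have a processor completing at time $\ge\rho_I$. The cleanest route is to compare $J$'s LPT schedule not with $I$'s full LPT schedule but with the "truncated" picture $\{w'_I(\cdot)\}$ that the paper has deliberately set up, and to use that $f$ maps the (small) jobs that were on $q$ optimally into job-slots of $p$ that are individually at least as large; since LPT with fewer, not-larger jobs cannot produce a strictly more balanced load profile than the bound already guaranteed by Lemma~\ref{sec21}, the needed inequality $LPT(J)\ge \rho_I$ survives. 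A secondary technical point is the degenerate case where $T^*_I(q)=\varnothing$ or where deleting $q$ would make the last task $n$ disappear; these corner cases are handled by noting that a processor with no optimally-assigned jobs can only help, and that task $n$ (the smallest, normalized to size $1$) is never the unique witness needed, so minimality is still violated.
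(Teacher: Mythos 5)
Your plan has a real gap at its core step, and it inverts the direction in which domination is actually useful. (For context: the paper states this lemma without proof, importing it from Coffman, Garey, and Johnson \cite{coffman1978application}.) You propose deleting the dominated processor $q$ together with $T^*_I(q)$; step~(1), $OPT(J)\le 1$, is then immediate, but step~(2) does not follow from the domination hypothesis. The map $f$ relates $T^*_I(q)$ (the \emph{optimal} tasks of $q$) to $T'_I(p)$ (the \emph{LPT} tasks of $p$, minus task $n$); it gives no handle on how greedy LPT re-runs on the reduced task multiset. The tasks of $T^*_I(q)$ are in general scattered across many processors in LPT's schedule for $I$, so deleting them can change every subsequent LPT decision, and there is no route from $f$ to the claim that ``the LPT schedule for $J$ assigns to each surviving processor a workload at least as large as the corresponding $w'_I(\cdot)$.'' You also conflate $w'_I(q)$ (the LPT-based quantity that the paper is tracking) with $w^*_I(q)$ (the optimal load of $q$); domination constrains the latter, not the former, and only relative to $T'_I(p)$.

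The deletion has to go the other way: delete the dominating (slower) processor $p$ together with $T'_I(p)$. Because $T'_I(p)$ is precisely what LPT put on $p$ (excluding the final task), LPT on $J$ reproduces LPT on $I$ step by step on every remaining processor, so $LPT(J)$ is controlled for free; one checks separately that $p$ is not the bottleneck processor (task $n$ lives on the bottleneck in a minimal instance, and $n\notin T'_I(p)$). The domination map is then used where the difficulty actually is, namely to show $OPT(J)\le OPT(I)$: reassign the surviving tasks of $T^*_I(p)$ to $q$ (possible since $s(q)\ge s(p)$), and use $f$ to repack $T^*_I(q)$ into the space vacated on the remaining processors by the deleted tasks $T'_I(p)$, noting that $\sum_{f(j)=i}t(j)\le t(i)$ ensures no processor's optimal load increases. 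Your plan has the easy and hard halves swapped, and the hard half is not proved.
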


\section{Tight Approximation Ratio for $n = m + 1$}

We will work with the case when the number of tasks (denoted by $n$) is equal to the number of processors added by $1$ in this section. The following lemma, which is mentioned in  \cite{gonzalez1977bounds,kovacs2010new}, gives some properties of minimal instances.

\begin{lemma} \label{sec31}
If there is an empty processor in one of optimal schedules, the instance $I$ is not minimal. 
Furthermore, if there exists a processor $p$ such that $T'_I(p) = \emptyset$, $I$ is not minimal
 or $\rho_I = 1$.
\end{lemma}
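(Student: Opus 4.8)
The plan is to prove the two statements in order, the second reducing to the first together with Lemma~\ref{sec21}.

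For the first statement, suppose some optimal schedule leaves a processor empty. Since $A^*$ is normalized so that $w^*_I(1) \ge \dots \ge w^*_I(m)$, the empty processor may be taken to be $m$, so $T^*_I(m) = \emptyset$. Form the smaller instance $I' = (m-1, n, s|_{\{1,\dots,m-1\}}, t)$ obtained by deleting processor $m$. The optimal schedule of $I$ uses only processors $1,\dots,m-1$, hence is a feasible schedule of $I'$ of makespan $OPT(I)$; conversely every schedule of $I'$ is a schedule of $I$, so $OPT(I') \ge OPT(I)$. Thus $OPT(I') = OPT(I) = 1$. It then suffices to show that deleting the slowest processor cannot decrease the LPT makespan, i.e. $LPT(I') \ge LPT(I)$: granting this, $\rho_{I'} = LPT(I')/OPT(I') \ge LPT(I)/OPT(I) = \rho_I$, and since $I'$ has fewer processors and no more tasks, $I$ is not minimal.

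The inequality $LPT(I') \ge LPT(I)$ is the technical heart of the argument, and I expect it to be the main obstacle. I would prove it by running LPT on $I$ and on $I'$ in parallel, processing the tasks in the common non-increasing order, and maintaining an invariant matching the $m-1$ processors of $I'$ to $m-1$ of the processors of $I$ so that each matched pair has the $I$-processor at least as fast and with completion time at most that of its $I'$-partner. When the next task arrives, $I$'s greedy choice is then at least as good as placing the task on the $I$-partner of the processor that $I'$ selects, which lets one re-establish the matching after the assignment, possibly after a local swap of partners; justifying that this re-establishment always succeeds is the delicate point. The mechanism is already visible in the two-processor case, where one checks directly that every processor's completion time in the $I$-schedule is bounded by the single processor's load in $I'$ by inspecting the last task routed to the slower processor. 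This monotonicity is implicit in \cite{gonzalez1977bounds,kovacs2010new}, and one may alternatively cite it.

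For the second statement, suppose $T'_I(p) = \emptyset$, i.e. $T^*_I(p) \subseteq \{n\}$. If $T^*_I(p) = \emptyset$, then processor $p$ is empty in an optimal schedule, and the first statement already yields that $I$ is not minimal. Otherwise $T^*_I(p) = \{n\}$, so $w^*_I(p) = t(n) = 1$; since the completion time $w^*_I(p)/s(p) = 1/s(p)$ of processor $p$ is at most $OPT(I) = 1$, we get $s(p) \ge 1$. Now either $I$ is not minimal, or $I$ is minimal and Lemma~\ref{sec21} applies to $p$, giving $(w'_I(p)+1)/s(p) \ge \rho_I$. Since $T'_I(p) = \emptyset$ forces $w'_I(p) = 0$, this reads $1/s(p) \ge \rho_I \ge 1$, whereas $s(p) \ge 1$ gives $1/s(p) \le 1$; hence $1/s(p) = 1$ and $\rho_I = 1$. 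This establishes the dichotomy.
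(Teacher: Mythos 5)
The paper itself does not prove Lemma~\ref{sec31}; it is cited from \cite{gonzalez1977bounds,kovacs2010new}. That said, the machinery the paper sets up in Section~2.2 makes the first statement essentially immediate via the Principle of Domination, and this is almost certainly the intended route: if $T^*_I(q) = \emptyset$ for some $q$, take $p = q$; then $s(p) \le s(q)$ holds trivially, the empty function $f: T^*_I(q) \to T'_I(p)$ exists, and the condition ``$t(i) \ge \sum_{f(j)=i} t(j)$ for all $i \in T'_I(p)$'' is vacuous since each sum is $0$. Hence $q$ dominates itself and $I$ is not minimal. Your route instead deletes the empty processor $m$ and claims $LPT(I') \ge LPT(I)$. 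This claim is the entire content of your argument for the first statement, and you do not prove it; you only gesture at a ``matching invariant'' that you yourself describe as the ``delicate point.'' That is a genuine gap, not a routine detail: monotonicity of a greedy schedule under removal of a resource is exactly the kind of claim that needs a real proof (Graham-type anomalies are the cautionary tale here), and nothing in the paper gives you this for free. Even if the monotonicity holds, you have replaced a one-line domination argument with an unproved lemma.

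For the second statement there is a separate problem of interpretation. You read the definition $T'_I(p) = T^*_I(p)\setminus\{n\}$ literally, so $T'_I(p)=\emptyset$ becomes $T^*_I(p)\subseteq\{n\}$ and you extract $s(p)\ge 1$ from $w^*_I(p)=1$. But the paper's \emph{usage} of $T'_I$ and $w'_I$ (see the proof of Theorem~\ref{thm1}: ``Let us first consider the LPT solution. As we know from Lemma~\ref{sec31} that $T'_I(p)\neq\emptyset$ $\ldots$'' and the sentence ``It is quite important to consider the LPT result without the final task $n$'') makes clear that $T'_I(p)$ is meant to be $T_I(p)\setminus\{n\}$, the LPT assignment minus the last task; the displayed formula is a typo. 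In particular, Lemma~\ref{sec21}, which you invoke, is only sensible with the LPT reading of $w'_I$, so your step ``$(w'_I(p)+1)/s(p)\ge\rho_I$ with $w'_I(p)=0$'' is applying that lemma under an interpretation in which it has not been established. The correct argument is close to yours in spirit but gets $s(p)\ge 1$ differently: by the first statement a minimal instance has no empty processor in the optimal schedule, so $w^*_I(p)\ge t(n)=1$ and thus $s(p)\ge 1$ for \emph{every} $p$; then $T'_I(p)=\emptyset$ (LPT sense) gives $w'_I(p)=0$ and Lemma~\ref{sec21} gives $1/s(p)\ge\rho_I$, hence $\rho_I\le 1$ and so $\rho_I = 1$. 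Your final contradiction is the right one, but the route to $s(p)\ge1$ must go through the first statement rather than through $T^*_I(p)=\{n\}$.
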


We prove the following lemma using the principle of domination.

\begin{lemma} \label{sec32}
If $I$ is minimal and $m \geq 2$, $|T^*_I(1)| \geq 2$.
\end{lemma}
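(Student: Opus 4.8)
The plan is a proof by contradiction driven by the Principle of Domination. Suppose $I$ is minimal with $m \ge 2$ but $|T^*_I(1)| \le 1$. If $|T^*_I(1)| = 0$, then processor $1$ is empty in the optimal schedule $A^*_I$, and Lemma \ref{sec31} immediately says that $I$ is not minimal. Similarly, if $n = 1$, then processor $2$ (which exists since $m \ge 2$) is empty in every optimal schedule, again contradicting minimality by Lemma \ref{sec31} — the trivial case $\rho_I = 1$ being set aside. Hence we may assume $n \ge 2$ and $T^*_I(1) = \{k\}$ for a single task $k$.

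The key observation is that, since processor $1$ carries only the one task $k$ in the optimum, it is cheap for some processor to dominate it: covering $k$ needs just one sufficiently large task on the LPT side. Concretely, let $p = A_I(1)$ be the processor assigned the largest task $1$ by LPT. Since $n \ge 2$ we have $1 \ne n$, so $1 \in T'_I(p)$, and the sorting $t(1) \ge \dots \ge t(n)$ gives $t(1) \ge t(k)$. Define $f : T^*_I(1) \to T'_I(p)$ by $f(k) = 1$. Then for the task $1 \in T'_I(p)$ we have $\sum_{f(j) = 1} t(j) = t(k) \le t(1)$, while every other element of $T'_I(p)$ has empty preimage and so trivially satisfies the required inequality; moreover $s(p) \le s(1)$ because processor $1$ is the fastest. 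Hence $p$ dominates processor $1$ (possibly with $p = 1$, which the definition explicitly allows), and the Principle of Domination forces $I$ to be non-minimal — a contradiction, which completes the proof.

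I expect no serious obstacle here; the argument is short once one sees that a single maximal task absorbs the lone optimal task sitting on processor $1$. The points needing care are purely bookkeeping: discarding the degenerate configurations ($|T^*_I(1)| = 0$, $n = 1$, $\rho_I = 1$) up front, and being explicit that $T'_I(p)$ is nonempty and contains task $1$ so that the domination map is well defined. If instead one reads $T'_I(p)$ as the optimal (rather than LPT) assignment minus task $n$, the same conclusion follows after observing that in this configuration the optimal processor of task $1$ in fact carries task $1$ alone, so task $1$ again lies in its $T'$ set.
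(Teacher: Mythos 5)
Your proof is correct and uses essentially the same argument as the paper: both invoke the Principle of Domination with the map sending the lone optimal task on processor~1 to task~1, which LPT places on (one of) the fastest processors. The paper states outright that LPT sends task~1 to processor~1 and concludes a self-domination, whereas you allow $p = A_I(1)$ to be any fastest processor and dispose of the degenerate cases ($|T^*_I(1)|=0$, $n=1$) explicitly; these are minor bookkeeping refinements, not a different route.
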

\begin{proof}
The LPT algorithm always assigns the largest task (task 1) to the fastest processor (processor 1). Assume a contradictory statement that $|T^*_I(1)| = 1$ and the only task assigned to processor 1 is task $j$. Then, we know that processor 1 dominates itself, because we can have $f(j) = 1$, and $t(1) \geq \sum_{f(j')=1}t(j') = t(j)$. Thus, $I$ is not minimal. \qed
\end{proof}

We then prove the next theorem based on Lemmas \ref{sec31} and \ref{sec32}.

\begin{theorem}
If $I = (m, m + 1, s, t)$ is minimal, its approximation ratio of LPT on uniform processors is not greater than $\rho_m$. \label{thm1}
\end{theorem}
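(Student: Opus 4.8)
The plan is to exploit the structure imposed by minimality when $n = m+1$: with only $m+1$ tasks distributed over $m$ processors by the optimal schedule, Lemma~\ref{sec31} forces every processor to be non-empty in $A^*$, so by counting exactly one processor receives two tasks and all others receive one. By Lemma~\ref{sec32} that special processor must be processor~$1$, so $T^*_I(1) = \{a, b\}$ for two task indices $a < b$, and $T^*_I(p)$ is a singleton for every $p \geq 2$. Since the optimal makespan is normalized to $1$, this gives $t(a) + t(b) \leq s(1)$ and $t(i) \leq s(p)$ whenever task $i$ is the lone task on processor $p \geq 2$ in the optimal schedule. In particular, since $t(n) = 1$ is the smallest task and it sits alone on the slowest processor in its optimal slot, we get control of the form $s(p) \geq 1$ for the relevant processors.

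Next I would analyze the LPT run. The first $m$ assignment steps place tasks $1, \dots, m$; I would argue (using the LPT tie-breaking/greedy rule together with $s(1) \geq \dots \geq s(m)$) that after these steps each processor has received exactly one task, so the workloads before the last step are precisely $t(i)/s(p)$ for the task $i$ LPT put on $p$. The final task $n$ is then placed on the processor minimizing $w_I'(p) + 1/s(p)$, and $LPT(I) = \min_p \big(w_I'(p) + 1/s(p)\big)$ is the quantity we must bound by $\rho_m$. Here is where Lemma~\ref{sec21} enters: minimality gives $(w_I'(p) + 1)/s(p) \geq \rho_I$ for every $p$, i.e. $w_I'(p) \geq \rho_I s(p) - 1$ for all $p$. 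Summing over all $p$, and using that $\sum_p w_I'(p) = \sum_{i=1}^{n-1} t(i) = \sum_{i=1}^{m} t(i)$ (the total size of all tasks except the unit task $n$), together with the optimal-side bound $\sum_p w^*_I(p) = \sum_{i=1}^n t(i) \leq \sum_p s(p)$, yields a relation among $\rho_I = \rho$, the speeds $s(1), \dots, s(m)$, and the task sizes.

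The crux is to convert these inequalities into the polynomial condition $2\rho^m - \rho^{m-1} - \dots - \rho - 2 \leq 0$, which forces $\rho \leq \rho_m$ by the definition of $\rho_m$ and the monotonicity of the left-hand side past its positive root. The way I expect this to work: on the LPT side, the processor $p^\*$ that receives task $n$ satisfies $w_I'(p^\*)/s(p^\*) + 1/s(p^\*) = \rho$ essentially (it is the minimizer, and if the ratio were strictly below $\rho_I$ on every processor we would contradict $LPT(I) = \rho_I \cdot OPT(I)$), while Lemma~\ref{sec21} pins down the others from below. Chasing which task LPT assigned where — the largest tasks go to the fastest processors, and the optimal schedule's pairing on processor~$1$ means $t(a) + t(b) \leq s(1)$ while LPT separated $a$ and $b$ onto distinct processors — should give a chain of inequalities $s(1) \geq \rho^{?} s(m)$, telescoping into the geometric-type sum $s(1) + \dots + s(m)$ that reproduces the coefficients $1, 1, \dots, 1$ and the leading/trailing $2$ in $2\rho^m - \rho^{m-1} - \dots - \rho - 2$.

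The main obstacle will be this last bookkeeping step: correctly tracking, for a minimal instance with $n = m+1$, exactly which task LPT places on each processor and in what order, so that the per-processor bounds from Lemma~\ref{sec21} and from optimality can be combined into the single polynomial inequality. In particular one must rule out degenerate LPT behaviors (e.g. two of tasks $1,\dots,m$ landing on the same processor before task $n$) using minimality and the principle of domination, and one must verify that the extremal configuration is exactly the Gonzalez--Ibarra--Sahni instance so that no slack is lost. I would handle this by a careful case-free argument: establish that LPT's partial schedule on $\{1,\dots,m\}$ is the identity assignment, then let $\rho$ be the resulting ratio and show algebraically that $\rho > \rho_m$ is inconsistent with all the collected constraints.
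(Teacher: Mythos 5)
Your high-level strategy matches the paper's: establish that the LPT partial schedule on tasks $1,\dots,m$ is the identity assignment $T'_I(p)=\{p\}$, apply Lemma~\ref{sec21} to get $t(p)+1\ge \rho_I s(p)$, pin down the optimal assignment, and telescope to obtain $2\rho_I^m-\rho_I^{m-1}-\cdots-\rho_I-2\le 0$. But the step you flag as ``the main obstacle'' --- the bookkeeping of the optimal assignment --- is precisely where the argument lives, and your sketch leaves it genuinely open.

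Two concrete gaps. First, you stop at ``$T^*_I(1)=\{a,b\}$, the rest singletons,'' but you need to identify $a$ and $b$. The paper does this via the principle of domination: since $T'_I(p)=\{p\}$, if processor $p\ge 2$ were given in the optimum any task of index $\ge p$, then $p$ would dominate itself (take $f$ sending the single optimal task to the single LPT task $p$). Hence $T^*_I(p)=\{p-1\}$ for $2\le p\le m$, which forces $T^*_I(1)=\{m,m+1\}$. Without this you cannot get the two ingredients the telescoping needs: the per-processor optimality bound $s(p+1)\ge t(p)$ for $p<m$, and $s(1)\ge t(m)+1\ge 2$ --- the latter being exactly what converts the telescoped inequality $s(1)\ge \rho_I^m s(1)-\rho_I^{m-1}-\cdots-\rho_I$ into the polynomial $\mathcal{P}(\rho_I)\le 0$ with the leading and trailing coefficient $2$. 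Second, the middle of your proposal wanders toward summing $w'_I(p)\ge\rho_I s(p)-1$ over $p$ and comparing totals; that style of argument (used in the paper's Lemma~\ref{lem15}) gives only a bound on $n$ in terms of $\rho_I$, not the sharp bound $\rho_I\le\rho_m$. The correct derivation is a chain, not a sum: combine $t(p)+1\ge\rho_I s(p)$ with $s(p+1)\ge t(p)$ to get $s(p+1)\ge\rho_I s(p)-1$, and with $s(1)\ge t(m)+1$ to get $s(1)\ge\rho_I s(m)$, then iterate. Also, a small attribution issue: $T'_I(p)=\{p\}$ is not a consequence of the LPT tie-breaking rule alone (LPT is free to stack two tasks on one processor); it requires Lemma~\ref{sec31}, i.e.\ minimality, to force every $T'_I(p)$ nonempty, after which the greedy rule gives the identity assignment.

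Once these pieces are in place your outline closes correctly, and the route is essentially the paper's.
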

\begin{proof}
Let us first consider the LPT solution. As we know from Lemma \ref{sec31} that $T'_I(p) \neq \emptyset$ for all $p$, we have $T_I'(p) = \{p\}$ for all $1 \leq p \leq m$. We know from Lemma \ref{sec21} that $(t(p) + 1)/s(p) = (w_I'(p) + 1)/s(p) \geq \rho_I$ for all $p$.

Consider the optimal solution. From Lemma \ref{sec31}, there is no empty processor there, and we know that there are two tasks at the fastest processor by Lemma \ref{sec32}. Hence, $|T_I^*(p)| = 1$ for all $2 \leq p \leq m$. By the similar argument as in the proof of Lemma \ref{sec32}, we know that $T_I^*(p) = \{p - 1\}$ for all $2 \leq p \leq m$, otherwise the processor $p$ dominates itself. The only tasks left for the processor 1 are then $m$ and $m + 1$. By our assumption that $OPT(I) = 1$, we have $(t(m) + t(m + 1))/s(1) = (t(m) + 1)/s(1) \leq 1$ and $t(p - 1) / s(p) \leq 1$ for all $2 \leq p \leq m$. Hence, $s(1) \geq t(m) + 1$ and $s(p + 1) \geq t(p)$ for $1 \leq p < m$.

By merging the results at the last sentences of the first two paragraph, we obtain $s(1) / s(m) \geq \rho_I$ and $(s(p + 1) + 1)/s(p) \geq \rho_I$ for $p < m$. They are $s(p + 1) \geq \rho_I s(p) - 1$ and $s(1) \geq \rho_I s(m)$. Then, we can have $s(1) \geq \rho_I s(m) \geq \rho_I [\rho_I s(m - 1) - 1] \geq \cdots \geq \rho_I^m s(1) - \rho_I^{m - 1} - \dots - \rho_I.$
Hence, $\rho_I^m s(1) - \rho_I^{m - 1} - \dots - \rho_I - s(1) \leq 0$. Because $s(1) \geq t(m) + 1 \geq 2$, this implies that $\mathcal{P}(\rho_I) = 2\rho_I^m - \rho_I^{m - 1} - \dots - \rho_I - 2 \leq 0$. We obtain that $\rho_I \leq \rho_m$, because we know that all solutions of $\mathcal{P}$ would be no larger than $\rho_m$. \qed
\end{proof}

\section{Minimality of Instance $(m, m + 2, s,t)$}

We will show in this section that any instance with $m$ processors and $m + 2$ tasks is not minimal. We begin our proof by the following lemma.

\begin{lemma} \label{sec41}
Let I be a minimal instance. For any processor $p$ such that $s(p) > 1/(\rho_I - 1)$, we have $w'_I(p) > s(p)$. Moreover, for any processor $p$ such that
$s(p) \geq 1/(\rho_I - 1)$,
we have $w'_I(p) \geq s(p)$.
\end{lemma}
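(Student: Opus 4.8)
The plan is to derive this lemma almost directly from Lemma~\ref{sec21}, which already gives a lower bound on $w'_I(p)$ for every processor of a minimal instance. Since the statement mentions $1/(\rho_I-1)$, we may assume $\rho_I > 1$ (if $\rho_I = 1$ the hypotheses on $s(p)$ are vacuous, as speeds are finite), and we use that all speeds are strictly positive.

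First I would rewrite the conclusion of Lemma~\ref{sec21}: from $(w'_I(p)+1)/s(p) \ge \rho_I$, clearing the (positive) denominator gives $w'_I(p) \ge \rho_I\, s(p) - 1$. The crux is then a one-line arithmetic equivalence: since $\rho_I - 1 > 0$, multiplying through by it shows that $s(p) > 1/(\rho_I-1)$ is equivalent to $(\rho_I-1)s(p) > 1$, i.e. to $\rho_I\, s(p) - 1 > s(p)$; and the same chain with every $>$ replaced by $\ge$ shows $s(p) \ge 1/(\rho_I-1)$ is equivalent to $\rho_I\, s(p) - 1 \ge s(p)$. Combining either equivalence with the bound $w'_I(p) \ge \rho_I\, s(p) - 1$ yields $w'_I(p) > s(p)$ in the strict case and $w'_I(p) \ge s(p)$ in the non-strict case, which is exactly the claim.

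There is essentially no obstacle here: the lemma is just a repackaging of Lemma~\ref{sec21} into a form that compares $w'_I(p)$ directly against $s(p)$ rather than against $\rho_I\, s(p) - 1$, which is the shape needed for the domination/counting arguments later in Section~4. The only things to state with care are that $\rho_I - 1 > 0$ (so that multiplying the threshold inequality by it preserves its direction) and that $s(p) > 0$ (so that clearing the denominator in Lemma~\ref{sec21} is legitimate).
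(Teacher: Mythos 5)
Your proof is correct and essentially identical to the paper's: both rearrange Lemma~\ref{sec21} to $w'_I(p) \ge \rho_I s(p) - 1$ and observe that the hypothesis $s(p) > 1/(\rho_I - 1)$ (resp. $\ge$) is equivalent to $\rho_I s(p) - 1 > s(p)$ (resp. $\ge$). Your extra care about $\rho_I > 1$ and $s(p) > 0$ is sound but not spelled out in the paper, where these are implicit.
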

\begin{proof}
We know from Lemma \ref{sec21} that $w'_I(p) \geq \rho_I s(p) - 1$. Our assumption $s(p) > 1/(\rho_I - 1)$ can be written in the form of $\rho_I s(p) - 1 > s(p)$. We then obtain $w'_I(p) > s(p)$. We can use the same argument to show the latter part of the lemma statement. \qed
\end{proof}

We now suppose that there is a minimal instance with $n = m + 2$, and consider its LPT solution. Without the smallest task, we have $m + 1$ tasks here. By Lemma \ref{sec31}, we know that only one processor, denoted by processor $p_L$, is assigned to two tasks. The others are assigned to one. It is clear that the larger task assigned to $p_L$ is the $p_L$-th largest task. Suppose that the smaller task assigned to the processor is $\tau_L$. The task assigned to $p$ is then the $p$-th largest task for $p \in \{1, \dots, \tau_L - 1\}\backslash\{p_L\}$. For $p \geq \tau_L$, the task assigned to $p$ is the $(p + 1)$-th largest task. 

We consider the optimal solution of the instance from the next lemma. As all the tasks in $T'_I(p)$ is no smaller than the $(m + 1)$-th largest task, if the $(m + 1)$-th or the $(m + 2)$-th largest task is assigned as the only task of $T^*_I(p)$ for some $p$, then the processor $p$ dominates itself. We then know that the $(m + 1)$-th and the $(m + 2)$-th must be assigned to the processor with more than one tasks in the optimal solution. 

\begin{lemma} \label{sec42}
If $I = (m, m + 2, s, t)$ is minimal and $m \geq 3$, then $|T^*_I(1)| = 2$.
\end{lemma}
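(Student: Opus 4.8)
The plan is to argue by contradiction, assuming $I = (m, m+2, s, t)$ is minimal with $m \geq 3$ but $|T^*_I(1)| \neq 2$. By Lemma \ref{sec32} we already know $|T^*_I(1)| \geq 2$, so the only case to rule out is $|T^*_I(1)| \geq 3$. The key structural fact established just before the statement is that the two smallest tasks, the $(m+1)$-th and $(m+2)$-th largest, must both sit on processors carrying more than one task in the optimal schedule; and since $n = m+2$, a simple counting argument shows that in the optimal solution exactly one processor has two tasks more than it would if every processor had one task — more precisely, the multiset of the sizes $|T^*_I(p)|$ sums to $m+2$, so either one processor has three tasks and the rest have one, or two processors have two tasks and the rest have one (after discarding empty processors, which are forbidden by Lemma \ref{sec31}). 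I would first pin down which of these two shapes is possible and then show the "one processor with three tasks" shape forces processor $1$ to be that processor, i.e. reduces to exactly the case $|T^*_I(1)| = 3$ that we must refute (or, symmetrically, that the two-doubleton shape has both doubletons including one of the two smallest tasks).

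The core of the argument should be a weight/speed counting inequality of the type already used in Theorem \ref{thm1} and Lemma \ref{sec41}. On the LPT side, every processor $p$ except $p_L$ carries exactly the task of rank equal to its position (shifted past $\tau_L$), so $w'_I(p)$ equals a single task size, and Lemma \ref{sec21} gives $(w'_I(p)+1)/s(p) \geq \rho_I$ for each $p$, while $p_L$ carries two tasks whose total is $w'_I(p_L)$. On the optimal side, $OPT(I) = 1$ forces $w^*_I(p) \leq s(p)$ for every $p$. If processor $1$ carried three tasks optimally, those three tasks are among ranks $\leq m+2$; combining $w^*_I(1) \leq s(1)$ with the lower bounds on the individual task sizes coming from Lemma \ref{sec21} applied to the processors those tasks "came from" in the LPT schedule, and with Lemma \ref{sec41} (which converts $s(p) \geq 1/(\rho_I-1)$ into $w'_I(p) \geq s(p)$ for fast processors), I expect to derive $s(1) \geq$ (something strictly larger than) $3 \cdot OPT$-worth of mass that cannot fit, contradicting the telescoping speed bound $s(1) \leq \rho_I^{\,m} s(m) - (\text{lower-order terms})$ that bounds how fast processor $1$ can be once $\rho_I \leq \rho_m$ fails to hold. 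The domination principle will be invoked repeatedly to fix exactly which tasks land on which processor in the optimal schedule, exactly as in the proof of Theorem \ref{thm1}: whenever a processor's optimal load is dominated (as a multiset) by a single LPT task sitting on a slower-or-equal processor, minimality is violated.

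The main obstacle I anticipate is the case analysis for where the two smallest tasks go and how the "extra" task on processor $1$ interacts with $p_L$ — in particular whether $p_L = 1$ or $p_L \neq 1$, and whether $\tau_L$ is one of the two smallest tasks. Each sub-case needs its own chain of domination arguments to force a rigid assignment, and then the arithmetic must be pushed far enough to contradict $\rho_I \leq \rho_m$ (which we may assume throughout, since $\rho_I > \rho_m$ would already contradict Theorem \ref{thm1} once we know the instance can be reduced — though here $n = m+2$, not $m+1$, so we actually want the contradiction to come purely from minimality plus the domination principle, not from Theorem \ref{thm1} directly). Getting the bookkeeping of task ranks versus processor indices right — especially the shift by one past $\tau_L$ on the LPT side — is where the proof is most error-prone, so I would set up careful notation for the rank of each task on each processor in both schedules before touching any inequality.
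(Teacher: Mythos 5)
Your reduction to the single remaining case $|T^*_I(1)| = 3$ is correct (Lemma~\ref{sec32} rules out $|T^*_I(1)| \leq 1$, and the no-empty-processor count from Lemma~\ref{sec31} with $n=m+2$ rules out $|T^*_I(1)| \geq 4$), and you correctly flag Lemmas~\ref{sec21}, \ref{sec31}, \ref{sec41} and the domination principle as the relevant tools. But the sketch never produces an argument, and the closing move you anticipate --- a telescoping speed-chain of the form used in Theorem~\ref{thm1} --- is not what makes this lemma go, and would require knowing the full optimal assignment on every processor, which at this stage of the paper is not yet available (that bookkeeping is precisely the content of the later Lemmas~\ref{sec43}--\ref{lem13}). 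Your stated direction $s(1) \leq \rho_I^m s(m) - \cdots$ is also the reverse of the inequality that telescoping actually yields.

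The paper's proof is much more local. Once $|T^*_I(1)|=3$, the two smallest tasks plus one more give $w^*_I(1) \geq 3$, hence $s(1) \geq 3$. Minimality together with Lemma~\ref{lem1} gives $\rho_I \geq \rho_3 > 4/3$, so $1/(\rho_I-1) < 3 \leq s(1)$, and Lemma~\ref{sec41} then yields $w'_I(1) > s(1) \geq w^*_I(1)$. This strict inequality already forces $|T'_I(1)|=2$ (a single LPT task on processor~$1$ would self-dominate its three-task optimal load), i.e.\ $p_L = 1$; a further self-domination check (mapping $\tau_O$ and $m{+}2$ to task $1$ and $m{+}1$ to itself) rules out $\tau_L = m+1$, so $\tau_L \leq m$. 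The contradiction then comes not from a speed chain but from LPT's greedy choice at the moment task $\tau_L$ was placed: LPT preferred processor~$1$ over processor~$\tau_L$, so $(t(1)+t(\tau_L))/s(1) \leq t(\tau_L)/s(\tau_L) \leq 1$, i.e.\ $w'_I(1) \leq s(1)$, directly contradicting $w'_I(1) > s(1)$. I would redirect your attempt toward this $s(1) \geq 3 \Rightarrow w'_I(1) > s(1)$ step followed by the LPT-preference inequality, rather than trying to rebuild the chain argument from Theorem~\ref{thm1} under the heavier combinatorics of $n = m+2$.
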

\begin{proof}
We know from Lemma \ref{sec32} that $|T^*_I(1)| \geq 2$. By Lemma \ref{sec31}, we obtain that $|T^*_I(1)| \in \{2,3\}$. It is then left to show that $|T^*_I(1)| \neq 3$. We assume a contradictory statement that $|T^*_I(1)| = 3$ and $|T^*_I(p)| = 1$ for $p > 1$. Then, $\{m + 1, m + 2\} \subseteq T^*_I(1)$. Denote the other task assigned to processor 1 by $\tau_O$.

As all task sizes are no less than 1, $w^*(1) = t(\tau_O) + t(m + 1) + t(m + 2) \geq 3$. Because $OPT(I) = 1$, $s(1) \geq w^*(1) \geq 3$. Because $I$ is minimal, we know from Lemma \ref{lem1} that $\rho_I \geq \rho_3 > 4/3$. Hence, $s(1) > 1/(\rho_I - 1)$, and, by Lemma \ref{sec41}, $w_I'(1) > s(1) \geq w_I^*(1)$. If there is only one task $i$ in $T_I'(1)$, then we have a self domination at the processor 1 by setting $f(j) = i$ for all $j \in T_I^*(1)$. Hence, there are two tasks in $T_I'(1)$ and $p_L = 1$. Furthermore, by $w_I'(1) > w_I^*(1)$, we have $t(1) + t(\tau_L) > t(\tau_O) + t(m + 1) + t(m + 2)$. If $\tau_L = m + 1$, we have a self-domination at processor $1$ by having $f(\tau_O) = f(m + 2) = 1, f(m + 1) = m + 1$. Therefore, $\tau_L \leq m$. 

We know that $T_I'(\tau_L) = \{\tau_L + 1\}$. To avoid the self-domination at the $\tau_L$-th fastest processor, the only task in $T_I^*(\tau_L)$ must be larger than the $(\tau_L + 1)$-largest task. To keep makespan of the optimal solution equal 1, we have $s(\tau_L) \geq t(\tau_L)$.
When the LPT algorithm chooses a processor to process the $\tau_L$-th largest task, it prefers processor 1 over the $\tau_L$-th fastest processor. We, therefore, have $(t(1) + t(\tau_L))/s(1) \leq t(\tau_L)/s(\tau_L) \leq 1$ and $w'_I(1) = t(1) + t(\tau_L) \leq s(1)$. That contradicts $w_I'(1) > s(1)$ which we show earlier. \qed
\end{proof}

We can conclude from Lemma \ref{sec42} that, in addition to processor 1, there is another processor with two tasks in the optimal solution. Suppose that it is the $p_O$-th fastest processor. We show a property of $p_O$ in the next lemmas.

\begin{lemma} \label{sec43}
If $I = (m, m + 2, s, t)$ is minimal and $m \geq 3$, then $p_O > p_L$.
\end{lemma}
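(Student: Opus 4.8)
The plan is to argue by contradiction: assume $I=(m,m+2,s,t)$ is minimal, $m\ge 3$, and $p_O\le p_L$. Since $p_O$ is, by definition, a second processor (besides processor $1$) with two tasks in the optimal schedule and processor $1$ already carries two tasks, we have $p_O\ge 2$, hence $2\le p_O\le p_L$. The two ``small'' tasks that cannot sit alone on any processor — namely tasks $m+1$ and $m+2$ — must be distributed among the two two-task processors $1$ and $p_O$ in the optimal schedule. So $T^*_I(1)$ and $T^*_I(p_O)$ between them absorb tasks $m+1$ and $m+2$, and every other processor $p\notin\{1,p_O\}$ has $T^*_I(p)=\{\sigma(p)\}$ a singleton with $\sigma(p)\le m$.

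The key step is to compare workloads. In the LPT schedule, the processor $p_O$ (being $\le p_L$, so possibly equal to $p_L$) carries either one or two tasks; in either case $T'_I(p_O)$ is nonempty and, by the structure paragraph preceding Lemma~\ref{sec42}, its largest element is the $p_O$-th largest task, i.e.\ $t(p_O)\in T'_I$ is the representative size on $p_O$ (and if $p_O=p_L$ there is also the smaller task $\tau_L$). I would then invoke Lemma~\ref{sec21} to get $(w'_I(p_O)+1)/s(p_O)\ge\rho_I$, and Lemma~\ref{sec41} whenever $s(p_O)$ is large enough to conclude $w'_I(p_O)\ge s(p_O)$ or the strict version — the threshold $1/(\rho_I-1)$ will be cleared because $w'_I(p_O)$ contains at least the task $t(p_O)$ and, since $p_O\le p_L$, by the LPT choice rule task $p_O$ was placed on processor $p_O$ rather than on a faster, already-loaded processor, which forces $s(p_O)$ to be not too small relative to $\rho_I$. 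This will let me show $w'_I(p_O)>w^*_I(p_O)$, or more precisely that the optimal load on $p_O$ is dominated task-by-task by $T'_I(p_O)$: the single or double task in $T^*_I(p_O)$, one of which is among $\{m+1,m+2\}$ and hence of size at most the $p_O$-th largest task, can be mapped into $T'_I(p_O)$, giving a self-domination at processor $p_O$. This contradicts the Principle of Domination and proves $p_O>p_L$.

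The main obstacle I expect is handling the borderline case $p_O=p_L$ cleanly, and more generally ruling out the possibility that the optimal two-task bundle on $p_O$ is \emph{heavier} than $w'_I(p_O)$ — this is exactly the situation that the symmetric argument in the proof of Lemma~\ref{sec42} had to fight (there one used the LPT preference $(t(1)+t(\tau_L))/s(1)\le t(\tau_L)/s(\tau_L)\le 1$ to bound $w'_I(1)\le s(1)$ and derive a contradiction with $w'_I(1)>s(1)$). Here I anticipate needing the analogous inequality chain for $p_O$ versus $p_L$: if $p_O<p_L$ were assumed-for-contradiction to fail only because the optimal bundle on $p_O$ is too big, I would use that the LPT algorithm, when placing the $p_L$-th task, preferred processor $p_O$ (or processor $p_L$ itself) over the alternative, yielding a bound of the form $(t(p_O)+t(\tau_L'))/s(p_O)\le 1$ that caps $w'_I(p_O)$ by $s(p_O)$ and collides with the strict lower bound from Lemma~\ref{sec41}. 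Assembling these inequalities in the right order, and making sure the case $p_O=p_L$ does not slip through, is where the real care is required; the rest is the routine domination bookkeeping already rehearsed in Lemmas~\ref{sec32} and~\ref{sec42}.
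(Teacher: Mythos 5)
There is a genuine gap. Your plan rests on clearing the threshold of Lemma~\ref{sec41} at processor $p_O$, but the only lower bound available is $s(p_O) \ge w^*_I(p_O) \ge 2$ (two tasks of size at least $1$ in $T^*_I(p_O)$), whereas $1/(\rho_I - 1)$ can be as large as $1/(\rho_3 - 1) \approx 2.63$ when $m = 3$. Nothing in the sketch actually establishes $s(p_O) > 1/(\rho_I - 1)$; the remark that the LPT placement rule ``forces $s(p_O)$ to be not too small'' is not a worked-out estimate, and this is precisely the missing step. The self-domination you aim for at $p_O$ is also not routine when $p_O < p_L$: there $T'_I(p_O) = \{p_O\}$ is a singleton, so the domination map would need the single inequality $t(p_O) \ge t(a) + t(m + i)$ where $T^*_I(p_O) = \{a, m + i\}$, and nothing you have said bounds $a$ from below relative to $p_O$.

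The paper's proof never touches workloads. It applies the domination principle with dominator $p_L$ (allowed because $s(p_L) \le s(q)$ for every $q \le p_L$), combined with a pigeonhole count. For each of the fastest $p_L$ processors $q$: if $T^*_I(q)$ contains no task among the $p_L - 1$ largest, then a map $f$ from $T^*_I(q)$ into $T'_I(p_L) = \{p_L, \tau_L\}$ (for the two-task processors $1$ and $p_O$, $f(m+1) = p_L$, $f(m+2) = \tau_L$; for the singleton processors, $f(a) = p_L$) exhibits $p_L$ dominating $q$, contradicting minimality. Hence each of processors $1, \dots, p_L$ must receive one of the $p_L - 1$ tasks of index less than $p_L$ in the optimal assignment, which is impossible. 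The lesson is to look first for a structural counting contradiction out of the domination principle before reaching for the workload machinery of Lemmas~\ref{sec21} and~\ref{sec41}.
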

\begin{proof}
Assume a contradictory statement that $p_O \leq p_L$.
Recall that $T'_I(p_L) = \{p_L, \tau_L\}$. Also, recall that the task $m + 1$ and $m + 2$ must be assigned to either processor $1$ or processor $p_O$. If $T'_I(1)$ or $T'_I(p_O)$ is $\{m + 1, m + 2\}$, then, by having $f(m + 1) = f(p_L), f(m + 2) = f(\tau_L)$, we have the processor $p_L$ dominate processor 1 or processor $p_O$. To make sure that there is not such a domination, both processors $1$ and $p_O$ must be given a task with size larger than the $p_L$-th largest task in the optimal solution. Similarly, for all processor $p \in \{1, \dots, p_L\} \backslash \{1, p_O\}$, the only task assign to $p$ in the optimal solution must be larger than the $p_L$-th largest task, otherwise $p$ is dominated by the processor $p_L$. Each of the fastest $p_L$ processors need a task larger than the $p_L$-th largest task in the optimal solution. This contradicts the fact that there are only $p_L - 1$ of them. \qed
\end{proof}

\begin{lemma}
If $I = (m, m + 2, s, t)$ is minimal and $m \geq 3$, then $p_O < \tau_L$. 
\end{lemma}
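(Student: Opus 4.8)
We want to show that under minimality and $m \ge 3$, the second "double" processor in the optimal schedule, $p_O$, satisfies $p_O < \tau_L$. The proof should again be a counting/domination argument, mirroring the structure of Lemma~\ref{sec43}. Recall the LPT structure: the only processor with two LPT tasks (after deleting task $n$) is $p_L$, and $T'_I(p_L) = \{p_L, \tau_L\}$ with $\tau_L > p_L$ (that's the definition of $\tau_L$, and Lemma~\ref{sec43} gives $p_O > p_L$). For $p \in \{1,\dots,\tau_L-1\}\setminus\{p_L\}$ we have $T'_I(p) = \{p\}$, and for $p \ge \tau_L$ we have $T'_I(p) = \{p+1\}$. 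Also recall from the paragraph before Lemma~\ref{sec42} that tasks $m+1$ and $m+2$ are each assigned in the optimal schedule to a processor holding $\ge 2$ tasks, hence to processor $1$ or to processor $p_O$.

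**The main argument.**

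First I would assume for contradiction that $p_O \ge \tau_L$; combined with $p_O > p_L$ we then have $p_O \ge \tau_L > p_L$. The idea is to find a processor that is dominated by $p_L$ (or that dominates itself), exactly as in Lemma~\ref{sec43}, using that $p_L$'s deleted-LPT load consists of the two "large" tasks $t(p_L)$ and $t(\tau_L)$. As before, if $T'_I(1) = \{m+1,m+2\}$ or $T'_I(p_O) = \{m+1,m+2\}$ — which can only happen if the relevant processor is $p_L$, impossible since $p_O,1 \ne p_L$ in the relevant sense — we would map $f(m+1) = p_L,\ f(m+2) = \tau_L$ and get $p_L$ dominating that processor. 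So both processor $1$ and processor $p_O$ must, in the optimal schedule, receive a "genuine" large task, i.e. a task at least as large as $t(p_L)$, beyond their share of $\{m+1,m+2\}$. Then, for every processor $p \in \{1,\dots,p_L\}\setminus\{1\}$, the single optimal task on $p$ must have size $\ge t(p_L)$ or else $p_L$ dominates $p$; here I must be careful that $p_O$ might lie in $\{1,\dots,p_L\}$ — but Lemma~\ref{sec43} rules that out, so $p_O > p_L$, and the counting is: the $p_L$ processors $1,2,\dots,p_L$ plus processor $p_O$ all demand a task of size $\ge t(p_L)$, but there are only $p_L$ tasks of that size (tasks $1,\dots,p_L$), one of which, task $p_L$ itself, sits on $p_L$ in the LPT-minus-last schedule and is "used up" as a target. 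Counting the tasks of size $\ge t(\tau_L)$ or $\ge t(p_L)$ against the processors that are forced to hold one should produce the contradiction — in the same spirit as the last line of Lemma~\ref{sec43}, where $p_L$ processors each need one of only $p_L - 1$ sufficiently large tasks.

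**Expected obstacle and refinement.**

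The delicate point, and where the write-up must be precise, is the bookkeeping of exactly which tasks count as "large enough" and exactly which processors are forced to claim one, once $p_O \ge \tau_L$ rather than $p_O \le p_L$. In Lemma~\ref{sec43} the assumption $p_O \le p_L$ made processors $1,\dots,p_L$ (excluding $p_O$ as one of the two doubles) the forced set; now with $p_O$ on the far side of $\tau_L$, I expect the right move is to look at tasks of size $\ge t(\tau_L)$: there are exactly $\tau_L$ of them (tasks $1,\dots,\tau_L$). Processors $1,\dots,\tau_L-1$ other than $p_L$ each hold a single LPT task from $\{1,\dots,\tau_L-1\}$, and one argues via the optimal schedule and domination by $p_L$ that each of the fast processors must be assigned an optimal task of size $\ge t(\tau_L)$, together with processor $p_O$ needing a second such task (since its LPT load $T'_I(p_O) = \{p_O+1\}$ is too small, $p_O + 1 > \tau_L$, it would be self-dominated unless its two optimal tasks include one of size $\ge t(\tau_L)$ that can cover $t(p_O+1)$). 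Summing these demands against the $\tau_L$ available tasks, where task $\tau_L$ itself is committed as a domination target on $p_L$, gives more demands than supply — the contradiction. The main effort is in verifying that the LPT tie-breaking and the load inequality $w'_I(p_L) = t(p_L) + t(\tau_L)$ really do force each optimal single-task load among the fast processors to be $\ge t(\tau_L)$ rather than merely $\ge t(p_L)$, so I would check that step against the argument already used just before Lemma~\ref{sec42} (where "all tasks in $T'_I(p)$ are $\ge$ the $(m+1)$-th largest task" drives the self-domination) before committing to the final count.
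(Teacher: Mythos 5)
Your proposal correctly identifies the right toolbox (domination by processor $p_L$, anti-self-domination forcing fast singleton processors to hold large tasks), but the argument as sketched does not close. The paper's proof is not a ``more demands than supply'' count as in Lemma~\ref{sec43}; it is an exact matching argument followed by a domination of processor~$1$. Under the assumption $p_O \geq \tau_L$, processors $2,\dots,\tau_L-1$ are exactly the singleton processors in the optimal schedule among the fast ones, and anti-self-domination forces $T^*_I(p)=\{p-1\}$ for those $p$. This consumes tasks $1,\dots,\tau_L-2$ exactly, with no surplus and no deficit. The payoff is then that processor~$1$'s two optimal tasks $j_\ell<j_s$ must have indices $j_\ell\geq\tau_L-1\geq p_L$ and $j_s\geq\tau_L$, so $t(j_\ell)\leq t(p_L)$ and $t(j_s)\leq t(\tau_L)$; setting $f(j_\ell)=p_L$, $f(j_s)=\tau_L$ shows $p_L$ dominates processor~$1$, contradicting minimality. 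No overcount occurs anywhere.

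Two concrete gaps in your sketch. First, your initial count over processors $1,\dots,p_L$ and $p_O$ against tasks $1,\dots,p_L$ does not yield a contradiction: there are $p_L-1$ singleton processors $2,\dots,p_L$ needing tasks from $\{1,\dots,p-1\}$, and those tasks suffice. Your refined count over tasks of size $\geq t(\tau_L)$ is in the right neighborhood, but the claim that $p_O$ ``needs a second such task'' to avoid self-domination is false: with $T'_I(p_O)=\{p_O+1\}$, self-domination of $p_O$ requires $t(p_O+1)\geq w^*_I(p_O)$, and since $w^*_I(p_O)\geq 2$ this is already avoided without any size-$\geq t(\tau_L)$ task on $p_O$. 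Second, you never extract the crucial consequence that $T^*_I(1)$ consists of two tasks both indexed $\geq\tau_L-1$, which is what makes the domination map $f$ to $T'_I(p_L)=\{p_L,\tau_L\}$ work. Without that pinpointing, there is no processor exhibited as dominated, and hence no contradiction.
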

\begin{proof}
Assume a contradictory statement that $p_O \geq \tau_L$. Recall that $T'_I(p) = \{p\}$ for all $p \in \{1, \dots, \tau_L - 1\} \backslash \{p_L\}$ and $T'_I(p_L) = \{p_L, \tau_L\}$. By the assumption, we know that $|T^*_I(p)| = 1$ for all $2 \leq p \leq \tau_L - 1$. To avoid the self-domination, we must have $T^*_I(p) = \{p - 1\}$ for $2 \leq p \leq \tau_L - 1$. We then know that the larger task of $T^*_I(1)$, denoted by $j_\ell$, is not one of the the $(\tau_L - 2)$-th largest task, and is no larger than the $p_L$-th largest task. On the other hands, the smaller task of $T^*_I(1)$, denoted by $j_s$ is not one of the $(\tau_L - 1)$-th largest, and is no larger than the $\tau_L$-th largest task. The processor 1 is then dominated by the $p_L$-fastest processor as we can have $f(j_\ell) = p_L$ and $f(j_s) = \tau_L$. \qed
\end{proof}

\begin{lemma}
If $I = (m, m + 2, s, t)$ is minimal and $m \geq 3$, then $T^*_I(p_O) = \{m + 1, m + 2\}$. 
\end{lemma}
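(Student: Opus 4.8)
The plan is to argue by contradiction. Assume $T^*_I(p_O)\neq\{m+1,m+2\}$. As observed just before the statement, tasks $m+1$ and $m+2$ must both lie in $T^*_I(1)\cup T^*_I(p_O)$, and by Lemma~\ref{sec42} we have $|T^*_I(1)|=|T^*_I(p_O)|=2$; also $p_L<p_O<\tau_L$ by the preceding lemmas. Hence exactly one of two cases occurs: either $T^*_I(1)=\{m+1,m+2\}$, or exactly one of $m+1,m+2$ lies in $T^*_I(1)$ and the other in $T^*_I(p_O)$. I would rule out both cases by exhibiting a domination, contradicting the Principle of Domination since $I$ is minimal.

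The first case is easy. Recall $T'_I(p_L)=\{p_L,\tau_L\}$, and note $t(p_L)\geq t(m+1)$ and $t(\tau_L)\geq t(m+2)$ because $p_L\leq m$ and $\tau_L\leq m+1$. If $p_L\neq 1$, the map sending $m+1\mapsto p_L$ and $m+2\mapsto\tau_L$ witnesses that processor $p_L$ dominates processor $1$ (here $s(p_L)\leq s(1)$); if $p_L=1$, then $T'_I(1)=\{1,\tau_L\}$ and the same map witnesses a self-domination at processor $1$. Either way we reach a contradiction.

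The second case is the crux. I would first dispose of $p_L=1$: then $T'_I(1)=\{1,\tau_L\}$, the task of $T^*_I(1)$ other than its element of $\{m+1,m+2\}$ has size at most $t(1)$, and that element has size at most $t(\tau_L)$, so processor $1$ dominates itself. Thus assume $p_L\geq 2$ and write $j_1,j_O\leq m$ for the tasks of $T^*_I(1)$ and $T^*_I(p_O)$ other than their respective elements of $\{m+1,m+2\}$. Using $T'_I(p_L)=\{p_L,\tau_L\}$ and $t(\tau_L)\geq t(m+1)\geq t(m+2)$ once more: if $j_1\geq p_L$, then mapping $j_1\mapsto p_L$ and the remaining task of $T^*_I(1)$ to $\tau_L$ shows $p_L$ dominates processor $1$; if $j_O\geq p_L$, the same map shows $p_L$ dominates processor $p_O$ (here $s(p_L)\leq s(p_O)$ since $p_L<p_O$). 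Hence $j_1,j_O\leq p_L-1$, and since $j_1\neq j_O$ this already forces $p_L\geq 3$.

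It remains to finish the second case by counting the single-task processors, namely $\{2,\dots,m\}\setminus\{p_O\}$, whose optimal tasks are exactly $\{1,\dots,m\}\setminus\{j_1,j_O\}$; in particular all $m-p_L+1$ tasks of index in $\{p_L,\dots,m\}$ are, in the optimal schedule, on single-task processors. But to avoid a self-domination, each processor $p$ with $2\leq p<p_L$ must receive a task of index at most $p-1<p_L$, and processor $p_L$ (whose $T'_I$-load is $\{p_L,\tau_L\}$ with $t(p_L)\geq t(\tau_L)$) must receive a task of index at most $p_L-1$; moreover processors $1$ and $p_O$ receive only $j_1,j_O\leq p_L-1$ besides their elements of $\{m+1,m+2\}$. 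So no processor of index at most $p_L$ can carry a task of index at least $p_L$, leaving only the $m-p_L-1$ processors in $\{p_L+1,\dots,m\}\setminus\{p_O\}$ to absorb $m-p_L+1$ such tasks, which is impossible. This contradiction finishes the second case, and hence the lemma. The step I expect to be the main obstacle is exactly this last one: pinning down the self-domination constraints on the single-task processors correctly (the strict inequalities, and the fact that processor $p_L$ behaves differently because $|T'_I(p_L)|=2$), and handling the small values $p_L=1,2$ so that the final pigeonhole count is genuinely contradictory.
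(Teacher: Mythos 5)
Your proof is correct, and it relies on the same essential domination idea as the paper---if a task from $\{m+1,m+2\}$ sits in $T^*_I(1)$ (or, in your setup, forces $j_1$ or $j_O$ to be large), then processor $p_L$ with $T'_I(p_L)=\{p_L,\tau_L\}$ dominates. But the organization genuinely differs. The paper handles everything in one sweep: it first uses the no-self-domination constraints to pin down $T^*_I(p)=\{p-1\}$ for $2\le p<p_O$, which by Lemma~\ref{sec43} immediately forces the larger optimal task $j_\ell$ of processor~$1$ to have index $\ge p_O-1\ge p_L$ (hence $t(j_\ell)\le t(p_L)$); then if the smaller task $j_s$ were $m+1$ or $m+2$, the map $f(j_\ell)=p_L$, $f(j_s)=\tau_L$ would give $p_L$ dominating~$1$, and that single observation already yields $\{m+1,m+2\}\subseteq T^*_I(p_O)$ with no case split. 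You instead split on $|T^*_I(1)\cap\{m+1,m+2\}|$, prove $j_1,j_O<p_L$ by two domination arguments, and close with a pigeonhole count: processors $1,2,\dots,p_L,p_O$ all demand tasks of index $<p_L$, of which there are too few. This is effectively the contrapositive of the paper's pinning-down step (Hall-type counting run in the opposite direction), so the logic is the same content, but your route is longer and needs the separate $p_L=1$ and $p_L\ge 2$ subcases; what it buys is that the counting obstruction is laid out explicitly rather than absorbed into the assertion $T^*_I(p)=\{p-1\}$. One small robustness remark (shared with the paper's own wording): the no-self-domination condition for a processor $p$ with $T'_I(p)=\{p\}$, $T^*_I(p)=\{j\}$ strictly requires $t(j)>t(p)$, not merely $j<p$; ties in task sizes are handled by the usual perturbation convention, but it is worth keeping in mind when phrasing the self-domination constraints by index.
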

\begin{proof}
Recall that $T'_I(p) = \{p\}$ for all $p \in \{1, \dots, \tau_L - 1\} \backslash \{p_L\}$ and $T'_I(p_L) = \{p_L, \tau_L\}$. Also, $T^*_I(p) = \{p - 1\}$ for all $2 \leq p < p_O$ to avoid the self-domination. The larger task assigned to processor $1$ in the optimal solution, denoted by $j_\ell$ is then not among the $p_O - 1$ largest task. By Lemma \ref{sec43}, it is smaller than the task $p_L$. If the smaller task, denoted by $j_s$ is the $(m + 1)$-th or $(m + 2)$-th, then processor $p_L$ dominates processor $1$ by setting $f(j_\ell) = p_L$ and $f(j_s) = \tau_L$. However, we discuss earlier that both the $(m + 1)$-th and $(m + 2)$-th must be assigned to processor 1 or processor $p_O$. As they are not assigned to processor 1, they must be in $T_I^*(p_O)$. \qed
\end{proof}

From the next lemma, we will consider the properties of tasks assigned to processor 1 in the optimal solution. Suppose that they are the $\tau_\ell$-th and $\tau_s$-th largest task, when $\tau_\ell < \tau_s$.
\begin{lemma} \label{lem11}
If $I = (m, m + 2, s, t)$ is minimal and $m \geq 3$, then $p_O - 1 \leq \tau_\ell \leq \tau_L - 2$ and $\tau_s = \tau_L - 1$. 
\end{lemma}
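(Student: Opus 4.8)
The plan is to nail down $\tau_\ell$ and $\tau_s$ from the structure of the optimal schedule that the previous lemmas have already extracted, using two applications of the principle of domination. Recall that, for a minimal $I=(m,m+2,s,t)$ with $m\ge 3$, that structure is: processors $2,\dots,p_O-1$ each carry one of the tasks $1,\dots,p_O-2$, processor $p_O$ carries $\{m+1,m+2\}$, processor $1$ carries $T^*_I(1)=\{\tau_\ell,\tau_s\}$, and the remaining processors $p_O+1,\dots,m$ carry, one each, the tasks in $\{p_O-1,\dots,m\}\backslash\{\tau_\ell,\tau_s\}$.

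First I would read off the two easy bounds. Since tasks $1,\dots,p_O-2$ are exactly the optimal tasks of processors $2,\dots,p_O-1$, processor $1$ can only be given tasks of index at least $p_O-1$, so $\tau_\ell\ge p_O-1$. For the upper bound $\tau_s\le\tau_L-1$, suppose instead $\tau_s\ge\tau_L$. Since $\tau_\ell\ge p_O-1\ge p_L$ we have $t(\tau_\ell)\le t(p_L)$, and $\tau_s\ge\tau_L$ gives $t(\tau_s)\le t(\tau_L)$; together with $s(p_L)\le s(1)$, the map $f(\tau_\ell)=p_L$, $f(\tau_s)=\tau_L$ from $T^*_I(1)$ into $T'_I(p_L)=\{p_L,\tau_L\}$ shows that processor $p_L$ dominates processor $1$, contradicting minimality. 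Hence $\tau_s\le\tau_L-1$, and since $\tau_\ell<\tau_s$ this gives $\tau_\ell\le\tau_L-2$ for free.

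The real work is the reverse inequality $\tau_s\ge\tau_L-1$, which I would prove by contradiction. Assume $\tau_s\le\tau_L-2$. The case $\tau_L=p_O+1$ is immediately ruled out (it would force $p_O-1\le\tau_\ell<\tau_s\le\tau_L-2=p_O-1$), so $\tau_L\ge p_O+2$, and in particular $p_L\le p_O-1\le\tau_L-3$. Now I invoke the convention $w^*_I(1)\ge\cdots\ge w^*_I(m)$: the single-task processors $p_O+1,\dots,m$ have non-increasing workloads, hence receive the tasks of $\{p_O-1,\dots,m\}\backslash\{\tau_\ell,\tau_s\}$ in increasing order of index, processor $p_O+j$ getting the $j$-th smallest such index. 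Exactly $\tau_L-p_O-1$ of these indices are at most $\tau_L-1$ (namely $\{p_O-1,\dots,\tau_L-1\}\backslash\{\tau_\ell,\tau_s\}$, using that both $\tau_\ell$ and $\tau_s$ lie in $\{p_O-1,\dots,\tau_L-2\}$), and $\tau_L-1$ is the largest of them, so task $\tau_L-1$ ends up on processor $\tau_L-1$ in the optimal schedule, i.e. $T^*_I(\tau_L-1)=\{\tau_L-1\}$. But in the LPT schedule $T'_I(\tau_L-1)=\{\tau_L-1\}$ as well, since $\tau_L-1<\tau_L$ and $\tau_L-1\ne p_L$. Then $f(\tau_L-1)=\tau_L-1$ is a self-domination at processor $\tau_L-1$, so $I$ is not minimal --- the desired contradiction. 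Hence $\tau_s=\tau_L-1$.

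I expect this last step to be the main obstacle: it is where one must reconstruct the entire optimal assignment on the slow processors from the workload-ordering convention and then track precisely how deleting $\{\tau_\ell,\tau_s\}$ shifts indices, so as to hit exactly the processor whose LPT bin coincides with its optimal task. Some care is also needed to check that ties among task sizes do not break the index count; as in the earlier lemmas, this is handled by fixing the optimal schedule consistently.
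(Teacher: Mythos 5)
Your proof is correct, and for the key half of the statement it takes a genuinely different route from the paper's. Both you and the paper obtain $\tau_\ell \geq p_O-1$ from the fact that $T^*_I(p)=\{p-1\}$ on processors $2,\dots,p_O-1$, and both obtain the upper bound on $\tau_s$ (and hence $\tau_\ell\le\tau_L-2$) by exhibiting the domination $f(\tau_\ell)=p_L$, $f(\tau_s)=\tau_L$ of processor $1$ by processor $p_L$. The divergence is in proving $\tau_s\ge\tau_L-1$. The paper argues that, to avoid being dominated \emph{by} processor $\tau_L-1$, every single-task processor $p\in\{2,\dots,\tau_L-1\}\setminus\{p_O\}$ must receive a task strictly larger than $t(\tau_L-1)$; this ties up $\tau_L-3$ of the $\tau_L-2$ largest tasks and then the $p_L$-domination pins down the remaining one as $\tau_\ell$ and forces $\tau_s=\tau_L-1$. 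You instead fix the optimal schedule with $w^*_I(1)\ge\cdots\ge w^*_I(m)$, show that processor $\tau_L-1$ must then carry task $\tau_L-1$, and conclude by a self-domination at processor $\tau_L-1$. This is shorter and uses one domination argument instead of two, at the price of leaning on the workload-ordering convention; the paper's version never needs that convention in this lemma. Your worry about ties is not actually a gap: the count shows that the workload placed on processor $\tau_L-1$ must equal $t(\tau_L-1)$ even if some task of equal size other than task $\tau_L-1$ ends up there, and since $w'_I(\tau_L-1)=t(\tau_L-1)$ too, the self-domination $f(j)=\tau_L-1$ only needs $t(\tau_L-1)\ge t(j)$, which holds with equality. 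So you can state the step in terms of workloads rather than task indices and drop the caveat.
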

\begin{proof}
Recall that $T^*_I(p) = \{p - 1\}$ for all $2 \leq p < p_O$ to avoid the self-domination. Therefore, by Lemma \ref{sec43}, $\tau_\ell \geq p_O - 1 \geq p_L$.

Also, $T_I'(p_L) = \{p_L, \tau_L\}$ and $T_I'(\tau_L - 1) = \{\tau_L - 1\}$. To avoid the domination from processor $\tau_L - 1$, the only element of processor $p \in \{2, \dots, \tau_L - 1\}\backslash\{p_O\}$ must be larger than the $(\tau_L - 1)$-th largest task. All but one of the $\tau_L - 2$ largest tasks are used there. We know that $\tau_\ell$ is the task that is not used and $\tau_s = \tau_L - 1$, otherwise we would have $\tau_\ell \leq p_L$ and $\tau_s \leq \tau_L$. In that case,  processor~1 is dominated by processor $p_L$. \qed
\end{proof}

We will now prove the following lemmas to prepare for Lemma \ref{lem13}.

\begin{lemma} \label{lem12}
If $I = (m, m + 2, s, t)$ is minimal, $m \geq 3$, and $\tau_L \neq m + 1$, then $w_I'(p_L) \leq s(p_L)$. 
\end{lemma}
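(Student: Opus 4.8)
The plan is to exploit the tension between the LPT decision rule and the optimal schedule's structure at processor $p_L$, exactly in the spirit of the proof of Lemma \ref{sec42}. Recall $T_I'(p_L) = \{p_L, \tau_L\}$, i.e. the LPT schedule (without the final task) puts the $p_L$-th largest and $\tau_L$-th largest tasks on processor $p_L$, so $w_I'(p_L) = t(p_L) + t(\tau_L)$. The hypothesis $\tau_L \neq m+1$ together with Lemma \ref{lem11} (which gives $\tau_s = \tau_L - 1$ and $\tau_\ell \le \tau_L - 2$) will be used to identify a processor in the \emph{optimal} schedule whose single assigned task is exactly the $\tau_L$-th largest task, or at least a task no smaller than $t(\tau_L)$, and whose speed is therefore at least $t(\tau_L)$.

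Concretely, first I would pin down which processor gets task $\tau_L$ in the optimal solution. By the self-domination arguments already assembled (the $(m+1)$-th and $(m+2)$-th tasks go to $p_O$, and $T_I^*(p) = \{p-1\}$ for $2 \le p < p_O$, while $T_I^*(1) = \{\tau_\ell, \tau_s\}$ with $\tau_s = \tau_L - 1$), the $\tau_L$-th largest task is not among those accounted for on processors $1, \dots, \tau_L - 1$ or on $p_O$; so it must be the sole task of some processor $q$ with $q \ge \tau_L$ (using $\tau_L \ne m+1$ to ensure task $\tau_L$ is not one of the two "tiny" tasks forced onto $p_O$, and $q \le m$ since there are enough slower processors). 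To avoid self-domination on $q$, and since $OPT(I)=1$, we get $s(q) \ge t(\tau_L)$, hence $s(q) \ge 1$ is not enough — I actually want the stronger comparison with $s(p_L)$.

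The key step is the LPT tie-break inequality. When LPT assigns the $\tau_L$-th largest task, it chose processor $p_L$ (which already carried the $p_L$-th largest task) over processor $q$ (which at that moment was either empty or carried a task no larger, depending on whether $q \ge \tau_L$ means $q$'s LPT task is the $(q+1)$-th largest). Writing the LPT preference $w_I'(p_L)/s(p_L) = (t(p_L) + t(\tau_L))/s(p_L) \le (\text{load on } q \text{ at that step} + t(\tau_L))/s(q) \le t(\tau_L)/s(q) \cdot (\text{something}) \le 1$ — more carefully, since $q$ is empty or lightly loaded and $s(q) \ge t(\tau_L)$, the LPT rule forces $(t(p_L) + t(\tau_L))/s(p_L) \le 1$, i.e. $w_I'(p_L) \le s(p_L)$. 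The main obstacle is handling the case $q = \tau_L$ versus $q > \tau_L$ correctly: if $q = \tau_L$ then at the LPT step processor $q$ is still empty and the inequality $(t(p_L)+t(\tau_L))/s(p_L) \le t(\tau_L)/s(\tau_L) \le 1$ is immediate; if $q > \tau_L$ one must check that processor $q$'s current LPT load plus $t(\tau_L)$ over $s(q)$ is still $\le 1$, which follows from $s(q) \ge t(\tau_L)$ and the fact that $q$'s eventual LPT task (the $(q+1)$-th largest) is no larger than $t(\tau_L)$, keeping $w_I'(q) \le s(q)$. Thus in every case the tie-break forces $w_I'(p_L) = t(p_L) + t(\tau_L) \le s(p_L)$, which is the claim.
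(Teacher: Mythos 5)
Your argument is correct and follows essentially the same route as the paper: both proofs exploit the LPT tie-break at the moment task $\tau_L$ is placed, comparing processor $p_L$ against an idle slower processor whose speed is at least $t(\tau_L)$, and thereby concluding $(t(p_L)+t(\tau_L))/s(p_L)\le 1$. The only cosmetic difference is that the paper compares directly against processor $\tau_L$ (using $T'_I(\tau_L)=\{\tau_L+1\}$ and self-domination to get $s(\tau_L)\ge t(\tau_L)$), whereas you compare against the processor $q\ge\tau_L$ that carries task $\tau_L$ in the optimum; your subcase ``$q>\tau_L$'' is also slightly overcomplicated, since any processor with index $\ge\tau_L$ is still empty when LPT considers task $\tau_L$, so no discussion of its partial load is needed.
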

\begin{proof}
If $\tau_L \leq m$, $T_I'(p_L) = \{p_L, \tau_L\}$ and $T_I'(\tau_L) = \{\tau_L + 1\}$. On the other hand, $T_I^*(p_L) = \{p_L - 1\}$ and $T^*_I(\tau_L) = \{\tau_L\}$. Because the LPT algorithm chooses to process the task $\tau_L$ at the processor $p_L$ not $\tau_L$, we know that $w_I'(p_L) / s(p_L) = (t(p_L) + t(\tau_L))/s(p_L) \leq t(\tau_L)/s(\tau_L) = w^*_I(\tau_L)/s(\tau_L) \leq 1$. From there, we can conclude that $w_I'(p_L) \leq s(p_L)$. \qed
\end{proof}

\begin{lemma} \label{lem122}
If $I = (m, m + 2, s, t)$ is minimal and $m \geq 3$, $\rho_I < 1.5$.
\end{lemma}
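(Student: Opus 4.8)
The plan is to derive a contradiction from assuming $\rho_I \geq 1.5$. The strategy is to use the structural lemmas already established (Lemmas \ref{sec42} through \ref{lem11}) to pin down the optimal and LPT schedules precisely, then play the lower bounds coming from minimality (Lemma \ref{sec21}) against the upper bounds coming from $OPT(I)=1$. The key observation is that $\rho_I \geq 1.5$ forces $1/(\rho_I-1) \leq 2$, so Lemma \ref{sec41} becomes very powerful: every processor $p$ with $s(p) \geq 2$ satisfies $w_I'(p) \geq s(p)$, and every processor with $s(p) > 2$ satisfies the strict inequality. Since $T_I^*(p_O) = \{m+1, m+2\}$ has $w^*(p_O) \geq 2$, we get $s(p_O) \geq 2$, hence $w_I'(p_O) \geq s(p_O) \geq 2 = w^*_I(p_O)$ or better, which sets up a domination-style argument on processor $p_O$ much like the one used for processor $1$ in the proof of Lemma \ref{sec42}.

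**First I would** record the consequences of $\rho_I \geq 1.5$: namely $1/(\rho_I - 1) \leq 2$, and also note $\rho_I \geq \rho_3 > 4/3$ is already known from Lemma \ref{lem1} so this is a genuine strengthening only of what we assume, not of what we know. Next I would examine processor $p_O$. We have $T_I'(p_O) = \{p_O\}$ (a single task, by the LPT structure, since $p_O \neq p_L$ as $p_O > p_L$ — wait, we need $p_O \neq p_L$, which holds by Lemma \ref{sec43}) and $w^*_I(p_O) = t(m+1) + t(m+2) \geq 2$, so $s(p_O) \geq 2 \geq 1/(\rho_I-1)$, giving $w_I'(p_O) \geq s(p_O) \geq w^*_I(p_O)$. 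But $w_I'(p_O) = t(p_O)$ is a single task, so if $t(p_O) \geq w^*_I(p_O)$ we can map both of $m+1$ and $m+2$ to it, getting that processor $p_O$ dominates itself — contradiction with the principle of domination. The only escape is if the inequality $w_I'(p_O) \geq w^*_I(p_O)$ cannot be leveraged because $t(p_O) < t(m+1) + t(m+2)$, which can only happen if... actually $w_I'(p_O) = t(p_O) \geq s(p_O) \geq t(m+1)+t(m+2) = w^*_I(p_O)$ directly gives the self-domination. So the argument seems to close immediately — meaning the real content is checking the edge cases.

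**The hard part will be** handling the case $s(p_O) = 2$ exactly, where Lemma \ref{sec41} only gives the non-strict bound $w_I'(p_O) \geq s(p_O)$; one must confirm $t(p_O) \geq 2$ still suffices for self-domination (it does, since the condition $t(i) \geq \sum_{f(j)=i} t(j)$ is non-strict), and also the boundary subtlety that $s(p_O) = 2$ forces $t(m+1) = t(m+2) = 1$ and $w'_I(p_O) = s(p_O)$, which by Lemma \ref{sec21} means equality throughout — this may pin $\rho_I$ to something specific but does not obstruct the domination. I would also need to double-check that $p_O$ is well-defined and distinct from $p_L$ in all configurations covered by $m \geq 3$ (guaranteed by Lemmas \ref{sec42} and \ref{sec43}), and that $T'_I(p_O) = \{p_O\}$ really is a single task (true since $p_O \neq p_L$ and $p_L$ is the unique processor with two LPT tasks among the first $m+1$). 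Once these bookkeeping points are settled, the contradiction is immediate, so I expect the proof to be short, with the only friction being the careful statement of which inequalities are strict versus non-strict.
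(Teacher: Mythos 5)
Your proof is correct and takes essentially the same approach as the paper: apply Lemma~\ref{sec41} under the assumption $\rho_I \geq 1.5$ (so $1/(\rho_I-1) \leq 2$) to get $w'_I(p) \geq s(p) \geq w^*_I(p)$ for a processor near the top with a single LPT task, then conclude self-domination. The paper phrases this for a generic $p \leq p_O$ with $|T'_I(p)|=1$ and argues such a $p$ exists because $p_O \geq 2$ while only $p_L$ has two LPT tasks; you simply observe that $p_O$ itself serves, since $p_O \neq p_L$ by Lemma~\ref{sec43} and $T^*_I(p_O) = \{m+1,m+2\}$ gives $w^*_I(p_O) \geq 2$ directly. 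One small note: you assert $T'_I(p_O) = \{p_O\}$, which actually requires $p_O < \tau_L$ (an earlier lemma); however, your argument only needs $|T'_I(p_O)| = 1$, which follows from $p_O \neq p_L$ alone, so the conclusion stands.
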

\begin{proof}
Assume that $I$ is a minimal instance such that $\rho_I \geq 1.5$, then, for any $p \leq p_O$, $1/(\rho_I - 1) \leq 2 \leq w_I^*(p_O)\leq w_I^*(p) \leq s(p)$. Thus, by Lemma \ref{sec41}, $w'_I(p) \geq s(p) \geq w_I^*(p)$. Since $p_O \geq 2$ and there is only one processor $p$ for which $|T_I'(p)| \geq 2$, there exists a processor $p \leq p_O$ for which $|T_I'(p)| = 1$. Let the only task of that processor be $i$. If $f(j) = i$ for all $j \in T^*_I(p)$, we have $w'_I(p) = t(i) \geq \sum_{f(j) = i}f(j) = w_I^*(p)$. This means that $p$ dominates itself and $I$ is not minimal. \qed
\end{proof}

In the following lemma, we consider a polynomial $\mathcal{P}_m(x) = 2x^m - x^{m -1} - \dots - x - 2$. It is known that, for any set $S \subseteq \{1, \dots, m\}$, all positive solutions of $\sum_{i\in S} \mathcal{P}_i(x) \leq 0$ are no larger than $\rho_m$. On the other hand, all positive solutions of $\sum_{i\in S} \mathcal{P}_i(x) + 3 - 2x \leq 0$ are in $(0,\rho_m] \cup [1.5, \infty)$.
\begin{lemma} \label{lem13}
If $I = (m, m + 2, s, t)$ is minimal and $m \geq 3$, $\tau_\ell \geq p_O + 1$.
\end{lemma}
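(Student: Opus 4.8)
The plan is to argue by contradiction. Assume $\tau_\ell\le p_O$; since Lemma~\ref{lem11} already gives $\tau_\ell\ge p_O-1$ and $\tau_s=\tau_L-1$, this leaves only the two cases $\tau_\ell=p_O-1$ and $\tau_\ell=p_O$. In either case the optimal assignment is now essentially forced: processors $2,\dots,p_O-1$ carry the single tasks $1,\dots,p_O-2$, processor $p_O$ carries $\{m+1,m+2\}$, processor $1$ carries $\{\tau_\ell,\tau_L-1\}$, and processors $p_O+1,\dots,m$ carry the remaining tasks one apiece, with the principle of domination (comparing $T^*_I(p)$ against the singleton $T'_I(p)$) fixing the order of that last group. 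I would also peel off the subcase $\tau_L=m+1$ at the start, because Lemma~\ref{lem12} — which I will use to neutralise the doubled LPT processor $p_L$ — is only available when $\tau_L\neq m+1$.

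With the assignment pinned down, the core of the proof mimics the telescoping argument of Theorem~\ref{thm1}. For every processor $p$ I have the LPT lower bound $w'_I(p)+1\ge\rho_I\,s(p)$ from Lemma~\ref{sec21}, the optimal upper bound $w^*_I(p)\le s(p)$ from $OPT(I)=1$, and, from the greedy rule of LPT (the inequality already exploited in Lemma~\ref{lem12}), bounds that let me bypass processor $p_L$. Substituting the now-known compositions of $w'_I(p)$ and $w^*_I(p)$ in terms of task sizes and chaining these inequalities around a cycle through processors $1,2,\dots,p_O$ and back, I expect to land on an inequality of the shape $\sum_{i\in S}\mathcal{P}_i(\rho_I)+3-2\rho_I\le 0$ for a suitable $S\subseteq\{1,\dots,m\}$: the constant $3$ collects the two unit-size tasks $m+1,m+2$ bundled at $p_O$ together with one further unit contribution that the hypothesis $\tau_\ell\le p_O$ forces at processor $1$, while $-2\rho_I$ is the linear residue left by the telescoping.

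By the stated property of the polynomials $\mathcal{P}_i$, any positive $\rho_I$ satisfying that inequality lies in $(0,\rho_m]\cup[1.5,\infty)$; Lemma~\ref{lem122} excludes $\rho_I\ge 1.5$, so $\rho_I\le\rho_m$. But Lemma~\ref{lem1} exhibits an instance with $m$ processors and $m+1$ tasks of ratio exactly $\rho_m$, and that instance is strictly smaller than $I=(m,m+2,s,t)$ in the sense of the minimality definition, so minimality of $I$ forces $\rho_I>\rho_m$ — a contradiction, whence $\tau_\ell\ge p_O+1$. The delicate part will be the bookkeeping in the telescoping step: determining the optimal assignment on processors $p_O+1,\dots,m$ precisely enough, handling the $\tau_L=m+1$ subcase without Lemma~\ref{lem12}, and tracking exactly which $+1$ terms survive so that the residue is precisely $3-2\rho_I$ rather than something too weak to force $\rho_I\le\rho_m$; once the right index set $S$ and the right constant are identified, the remainder follows the template already established for Theorem~\ref{thm1}.
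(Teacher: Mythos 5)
Your high-level plan (assume $\tau_\ell\le p_O$, restrict to $\tau_\ell\in\{p_O-1,p_O\}$ by Lemma~\ref{lem11}, pin down the optimal assignment, telescope Lemma~\ref{sec21} together with $w^*_I(p)\le s(p)$ into a polynomial inequality, discard $\rho_I\ge 1.5$ via Lemma~\ref{lem122}) is the right broad strategy, and your closing minimality argument is correct. But you miss the step that makes the telescoping tractable: under the contradiction hypothesis the paper first shows $p_L=1$ and $\tau_L=m+1$. The chain is: $\tau_\ell\le p_O$ gives $w^*_I(1)=t(\tau_\ell)+t(\tau_s)\ge t(p_O)+1$; combined with $s(p_O)\ge 2$ and Lemma~\ref{sec21} this gives $\rho_I\le w^*_I(1)/2$, and since $\rho_I>\rho_3>(1+\sqrt 3)/2$ one gets $s(1)\ge w^*_I(1)>1/(\rho_I-1)$; Lemma~\ref{sec41} then forces $w'_I(1)>s(1)\ge w^*_I(1)$, so $|T'_I(1)|$ cannot be $1$ (else processor~1 dominates itself), i.e.\ $p_L=1$; and finally $w'_I(p_L)>s(p_L)$ together with the \emph{contrapositive} of Lemma~\ref{lem12} forces $\tau_L=m+1$. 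This is the opposite of how you plan to use Lemma~\ref{lem12}: you want to invoke it directly (which requires $\tau_L\ne m+1$) and ``peel off'' $\tau_L=m+1$ as a stray subcase, but in fact $\tau_L\ne m+1$ is impossible here, so your main branch is vacuous and your peeled-off branch is the entire proof. Without deriving $p_L=1$ and $\tau_L=m+1$ you cannot pin down $T'_I$ well enough to run the telescoping, and your proposal never resolves this.

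A secondary inaccuracy: you predict the telescoping always lands on $\sum_{i\in S}\mathcal{P}_i(\rho_I)+3-2\rho_I\le 0$. In the paper's four-way split (on $\tau_\ell=p_O-1$ vs.\ $\tau_\ell=p_O$, and on whether $p_O$ is extremal), only the case $\tau_\ell=p_O=2$ produces the $3-2\rho_I$ residue and therefore needs Lemma~\ref{lem122}; the other three cases close with a clean $\mathcal{P}_{i_1}(\rho_I)+\mathcal{P}_{i_2}(\rho_I)\le 0$, using only $s(2)\ge s(p_O)\ge 2$. Tracking when the substitution $s(\cdot)\ge s(\cdot)\rho_I-1$ (which is what manufactures the $3-2\rho_I$) is actually needed, versus when $s(\cdot)\ge 2$ suffices, is exactly the bookkeeping you flag as ``delicate'' but do not carry out; as written, the proposal identifies the target but does not reach it.
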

\begin{proof}
Assume contradictory statement that $\tau_\ell \leq p_O$. Then, $w_I^*(1) = t(\tau_\ell) + t(\tau_s) \geq  t(p_O) + 1$. On the other hand, as $s(p_O) \geq |T^*_I(p_O)| = 2$, by Lemma \ref{sec21}, we have $\rho_I \leq (w_I'(p_O) + 1)/s(p_O) = (t(p_O) + 1)/s(p_O) \leq w_I^*(1) / 2$. Because of the minimality of $I$ and Lemma \ref{lem1}, $\rho_I \geq \rho_3 > (1 + \sqrt{3}) / 2$ and $1/(\rho_I - 1) < 2\rho_I \leq w_I^*(1) \leq s(1)$. By Lemma \ref{sec41}, we know that $w_I'(1) > s(1)$. If there is only one task in $T_I'(1)$, then $w_I'(1) > s(1) \geq w_I^*(1)$, processor 1 dominates itself. Therefore, $p_L = 1$ and $w_I'(p_L) > s(p_L)$. By Lemma \ref{lem12}, we obtain that $\tau_L = m + 1$, $T_I'(1) = \{1, m + 1\}$ and $T_I'(p) = \{p\}$ for $p \geq 2$. By Lemma \ref{sec21},
\begin{equation} \label{eqn1}
    t(1) + t(m + 1) + 1 \geq s(1)\rho_I \textrm{~~~and~~~~} t(p) + 1 \geq s(p) \rho_I \textrm{~~~for all } p \geq 2.
\end{equation}
From Lemma \ref{lem11}, it is left to show that $\tau_\ell \in \{p_O - 1,p_O\}$ leads to contradictions. We then consider the following four cases:
\begin{case}[$\tau_\ell = p_O - 1, p_O = m$] We then have $T_I^*(1) = \{m - 1, m\}$, $T_I^*(p) = \{p - 1\}$ for $2 \leq p \leq m - 1$, and $T_I^*(m) = \{m + 1, m + 2\}$. From (\ref{eqn1}) and the fact that $s(p) \geq w_I^*(p)$ for all $p$, we obtain:
$t(m - 1)  \geq  s(m - 1)\rho_I - 1 \geq t(m - 2)\rho_I - 1\geq (s(m - 2)\rho_I - 1)\rho_I - 1 
 \geq  \cdots \geq s(2)\rho_I^{m - 2} - \rho_I^{m - 3} - \cdots - 1.$
 
We then have $s(2) + s(m) \geq t(1) + t(m + 1) + 1 \geq s(1)\rho_I \geq (t(m-1) + t(m))\rho_I \geq s(2)\rho_I^{m - 1} - \rho_I^{m - 2} - \cdots - \rho_I + s(m)\rho_I^2 - \rho_I$. Since $s(2) \geq s(m) \geq 2$, we have $\mathcal{P}_{m - 1}(\rho_I) + \mathcal{P}_2(\rho_I) \leq 0$. Then, $\rho_I \leq \rho_m$, which contradicts the fact that $I$ is minimal.
\end{case}
\begin{case}[$\tau_\ell = p_O - 1, p_O < m$] We then have $T_I^*(1) = \{p_O - 1, m\}$, $T_I^*(p) = \{p - 1\}$ for $p \in \{2, \dots, m\} \backslash \{p_O\}$, and $T_I^*(p_O) = \{m + 1, m + 2\}$. By (\ref{eqn1}),
\begin{eqnarray*}
t(p_O - 1) & \geq & s(p_O - 1)\rho_I - 1 \geq t(p_O - 2)\rho_I - 1\geq (s(p_O - 2)\rho_I - 1)\rho_I - 1 \\
& \geq & \cdots \geq s(2)\rho_I^{p_O - 2} - \rho_I^{p_O - 3} - \cdots - 1.\\
t(m) & \geq & s(p_O+1)\rho_I^{m - p_O} - \rho_I^{m - p_O - 1} - \cdots - 1.
\end{eqnarray*}
By $s(p_O + 1) \geq t(p_O) \geq s(p_O)\rho_I - 1$ and Lemma \ref{sec21}, we have 
\begin{eqnarray*}
& & s(2) + s(p_O) \geq t(1) + t(m + 1) + 1 \geq s(1)\rho_I \geq (t(p_O - 1) + t(m))\rho_I \\
& \geq & s(2)\rho_I^{p_O - 1} - \rho_I^{p_O - 2} - \cdots - \rho_I + s(p_O+1)\rho_I^{m - p_O + 1} - \rho_I^{m - p_O} - \cdots - \rho_I \\
& \geq & s(2)\rho_I^{p_O - 1} - \rho_I^{p_O - 2} - \cdots - \rho_I + s(p_O)\rho_I^{m - p_O + 2} - \rho_I^{m - p_O + 1} - \cdots - \rho_I.
\end{eqnarray*}
Because $s(2) \geq s(p_O) \geq 2$, we can conclude that $\mathcal{P}_{p_O - 1}(\rho_I) + \mathcal{P}_{m - p_O + 2}(\rho_I) \leq 0$ and $\rho_I \leq \rho_m$, which contradicts the fact that $\rho_I > \rho_m$.
\end{case}
\begin{case}[$\tau_\ell = p_O = 2$] We then have $T_I^*(1) = \{2,m\}$, $T_I^*(2) = \{m + 1,m + 2\}$, $T_I^*(3) = \{1\}$, and $T_I^*(p) = \{p - 1\}$ for $p \geq 4$. By (\ref{eqn1}), $t(m) \geq s(3)\rho_I^{m - 2} - \rho_I^{m - 1} - \cdots - 1$. Thus, $s(2) + s(3) \geq t(1) + t(m + 1) + 1 \geq s(1)\rho_I \geq (t(2) + t(m))\rho_I \geq s(2)\rho_I^2 - \rho_I + s(3)\rho_I^{m - 1} - \rho_I^{m - 2} - \cdots - \rho_I$.
Because $s(2) \geq 2$, $s(3) \geq t(1) \geq t(2) \geq s(2)\rho_I - 1$, and $\rho_I > 1$,
\begin{eqnarray*}
0 & \geq & s(2)(\rho_I^2 - 1) - \rho_I + s(3)(\rho_I^{m - 1} - 1) - \rho_I^{m - 2} - \cdots - \rho_I \\
& \geq & s(2)(\rho_I^2 - 1) - \rho_I + (s(2)\rho_I - 1)(\rho_I^{m - 1} - 1) - \rho_I^{m - 2} - \cdots - \rho_I \\
& \geq & 2\rho_I^2 - \rho_I - 2 + 2\rho_I^m - \rho_I^{m - 1} - \cdots - \rho_I - 2 + 3 - 2\rho_I \\ & = & \mathcal{P}_2(\rho_I) + \mathcal{P}_m(\rho_I) + 3 - 2\rho_I.
\end{eqnarray*}
We then have $\rho_I \in (0, \rho_m] \cup [1.5, \infty)$, which contradicts Lemma \ref{sec21} or Lemma~\ref{lem122}.
\end{case}
\begin{case}[$\tau_\ell = p_O$, $p_O \neq 2$] We then have $T_I^*(1) = \{p_O,m\}$, $T_I^*(p) = \{p - 1\}$ for $2 \leq p < p_O$, $T_I^*(p_O) = \{m + 1,m + 2\}$, $T_I^*(p_O + 1) = \{p_O - 1\}$, and $T_I^*(p) = \{p - 1\}$ for $p \geq p_O + 2$.
Also, $p_O = \tau_\ell < \tau_s = m$. Then, by (\ref{eqn1}), 
\begin{eqnarray*}t(m) & \geq & s(m)\rho_I - 1 \geq t(m - 1) \rho_I - 1 \geq (s(m - 1)\rho_I) - 1) \rho_I - 1 \geq \cdots\\ 
& \geq & s(p_O + 1)\rho_I^{m - p_O} - \rho_I^{m - p_O - 1} - \cdots - 1 \\
& \geq & t(p_O - 1)\rho_I^{m - p_O} - \rho_I^{m - p_O - 1} - \cdots - 1 \\ & \geq & \cdots  \geq  s(2) \rho_I^{m - 2} - \rho_I^{m - 3} - \cdots - 1.
\end{eqnarray*}
Thus, $s(2) + s(p_O) \geq t(1) + t(m + 1) + 1 \geq s(1)\rho_I \geq (t(p_O) + t(m))\rho_I \geq s(p_O)\rho_I^2 - \rho_I + s(2) \rho_I^{m - 1} - \rho_I^{m - 2} - \cdots - \rho_I$. Since $s(2) \geq s(p_O) \geq 2$, we obtain $\mathcal{P}_2(\rho_I) + \mathcal{P}_{m - 1}(\rho_I)$. That means $\rho_I \leq \rho_m$, which contradicts Lemma \ref{lem1}.  \qed
\end{case}
\end{proof}
By previous lemmas, we have obtain several properties of minimal instances when $n = m + 2$. We are now ready to show in the next theorem that there is no instance with such properties, and, hence, an
instance $(m, m + 2, s, t)$ is not minimal.
\begin{theorem}
Any instance $I = (m, m + 2, s, t)$ is not minimal. \label{thm2}
\end{theorem}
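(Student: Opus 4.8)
The plan is to collect every structural consequence of minimality obtained in Lemmas~\ref{sec42}--\ref{lem13} and show they cannot hold simultaneously. For $m \le 2$ the statement is immediate: when $m = 1$ the LPT schedule is optimal, so $\rho_I = 1$, which is already attained by a one-task instance and hence $I$ is not minimal; when $m = 2$, Lemma~\ref{lem1} gives $\rho_I \le \rho_2$ while the Gonzalez--Ibarra--Sahni instance with two processors and three tasks attains $\rho_2$, so a strictly smaller instance has ratio at least $\rho_I$ and $(2, m+2, s, t)$ is not minimal. So assume $m \ge 3$ and, for contradiction, that $I = (m, m+2, s, t)$ is minimal; applying minimality to the $(m, m+1)$-instance of Lemma~\ref{lem1} yields $\rho_I > \rho_m$.

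By Lemmas~\ref{sec42}--\ref{lem13} we then have $|T^*_I(1)| = 2$, $p_L < p_O < \tau_L$, $T^*_I(p_O) = \{m+1, m+2\}$, $T^*_I(p) = \{p-1\}$ for $2 \le p < p_O$, $\tau_s = \tau_L - 1$, and $p_O + 1 \le \tau_\ell \le \tau_L - 2$; also $\rho_I < 1.5$ by Lemma~\ref{lem122}. Since $p_O \ge p_L + 1 \ge 2$, the inequality $p_O + 1 \le \tau_L - 2$ forces $\tau_L \ge 5$; as $\tau_L \le m + 1$, this already rules out $m = 3$ and leaves only $m \ge 4$.

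For $m \ge 4$ I would first finish describing the optimal schedule: the $m - p_O$ tasks of $\{p_O - 1, \dots, m\} \backslash \{\tau_\ell, \tau_L - 1\}$ must be placed one per processor on $p_O + 1, \dots, m$, and avoidance of self-domination --- recall that $T'_I(p) = \{p\}$ for $p < \tau_L$ with $p \ne p_L$, and $T'_I(p) = \{p+1\}$ for $p \ge \tau_L$ --- together with $w^*_I(p_O+1) \ge \cdots \ge w^*_I(m)$ pins this assignment down. Then I would write, for every processor $p$, the bounds $w'_I(p) \ge \rho_I s(p) - 1$ (Lemma~\ref{sec21}) and $w^*_I(p) \le s(p)$ (from $OPT(I) = 1$), add the greedy inequality $w'_I(p_L) \le s(p_L)$ of Lemma~\ref{lem12} when $\tau_L \le m$ (when $\tau_L = m+1$ the bound of Lemma~\ref{lem122} plays the same role it did in the proof of Lemma~\ref{lem13}), and telescope the resulting chain of inequalities among the speeds exactly as in the proofs of Theorem~\ref{thm1} and Lemma~\ref{lem13}, using the normalization $t(m+2) = 1$. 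Each resulting configuration should collapse to an inequality $\sum_{i \in S} \mathcal{P}_i(\rho_I) \le 0$, or $\sum_{i \in S} \mathcal{P}_i(\rho_I) + 3 - 2\rho_I \le 0$ in the cases analogous to Case~3 of Lemma~\ref{lem13}; by the remark preceding Lemma~\ref{lem13} and $\rho_I < 1.5$ this forces $\rho_I \le \rho_m$, contradicting $\rho_I > \rho_m$.

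The case analysis is the main obstacle. It should be organised by whether $p_L = 1$ (so that $T'_I(1) = \{1, \tau_L\}$) or $p_L \ge 2$, by whether $\tau_L = m+1$ or $\tau_L \le m$, and by the position of $\tau_\ell$ in the interval $[p_O + 1, \tau_L - 2]$, since these determine which speed is multiplied by which power of $\rho_I$ when the chain is telescoped, and hence which sum $\sum_{i \in S} \mathcal{P}_i$ appears. The bookkeeping closely parallels the four cases in the proof of Lemma~\ref{lem13}; the only genuinely new feature is the block of processors lying strictly between $p_O$ and $\tau_L$, and I expect each case to again reduce to a nonpositive combination of the $\mathcal{P}_i$'s (with the $3 - 2\rho_I$ correction where needed), so that no new idea beyond careful accounting is required.
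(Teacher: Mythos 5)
Your opening reductions are fine, and you make a genuinely useful observation the paper does not spell out: combining Lemma~\ref{sec43} ($p_O \ge p_L + 1 \ge 2$), Lemma~\ref{lem13} ($\tau_\ell \ge p_O + 1$), and Lemma~\ref{lem11} ($\tau_\ell \le \tau_L - 2$) already forces $\tau_L \ge 5$, hence $m \ge 4$; so after Lemma~\ref{lem13} the case $m = 3$ is vacuous. Your handling of $m \le 2$ is also correct, though the paper implicitly restricts to $m \ge 3$.

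The difficulty is that for $m \ge 4$ your argument stops at a sketch. You write that each configuration ``should collapse'' to $\sum_{i\in S}\mathcal{P}_i(\rho_I)\le 0$ and that you ``expect'' the accounting to work out, but the theorem's content is precisely that telescoping. Which speed $s(p)$ gets multiplied by which power of $\rho_I$ depends on the relative positions of $p_L$, $p_O$, $\tau_\ell$, $\tau_L$, and whether the $3 - 2\rho_I$ correction (needed when only $s(p_O)\rho_I - 1$, not $s(p_O)\ge 2$, is available for one speed) can be absorbed; there is no a priori reason the degrees $\alpha,\beta$ land in $\{1,\dots,m\}$ or that the lower bound $2$ on the relevant speed is always available. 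The paper's case split is by $p_L = p_O - 1$ versus $p_L < p_O - 1$, crossed with $\tau_L = m+1$ versus $\tau_L \le m$. That split is not cosmetic: in the case $p_L < p_O - 1$ and $\tau_L \le m$, the fact that $p_L + 1 < p_O$ forces $T^*_I(p_L+1) = \{p_L\}$, which gives $t(p_L) = w^*_I(p_L+1) \ge w^*_I(p_O) \ge 2$ and hence $s(p_L) \ge 3 > 1/(\rho_I - 1)$, producing an immediate contradiction with Lemma~\ref{lem12} via Lemma~\ref{sec41} --- no polynomial estimate at all. Your proposed split by $p_L = 1$ versus $p_L \ge 2$ does not isolate this case and so does not obviously reproduce that shortcut, nor does it clearly control the degree bookkeeping in the remaining three cases. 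Without carrying out at least one of the telescoping chains to the point of exhibiting the polynomial inequality and checking $\alpha', \beta' \le m - 1$, the proof is incomplete.
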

\begin{proof}
Assume that $I$ is minimal. We will consider the following four cases:
\begin{case}[$p_L = p_O - 1, \tau_L = m + 1$] Recall Lemma \ref{lem13} that $\tau_\ell > p_O$. We then have $T'_I(p) = \{p\}$ for $p \neq p_O - 1$, $T'_I(p_O - 1) = \{p_O - 1, m + 1\}$, $T_I^*(1) = \{\tau_\ell, m\}$, $T_I^*(p) = \{p - 1\}$ for $p \in \{2, \dots, m\} \backslash \{p_O\}$, and $T_I^*(p_O) = \{m + 1, m + 2\}$. Then,
\begin{eqnarray}
& & s(p_O) + s(p_O + 1)  \geq t(p_O - 1) + t(m + 1) + 1 \geq s(p_O - 1) \rho_I \nonumber \\ & \geq & s(1) \rho_I^{p_O - 1} - \rho_I^{p_O - 2} - \cdots - \rho_I \geq (t(\tau_\ell) + t(m)) \rho_I^{p_O - 1} - \rho_I^{p_O - 2} - \cdots - \rho_I \nonumber \\
& \geq & s(\tau_\ell)\rho_I^{p_O} - \rho_I^{p_O - 1} + s(\tau_\ell + 1)\rho_I^{m - \tau_\ell + p_O - 1} - \rho_I^{m - \tau_\ell + p_O - 2} - \dots - \rho_I^{p_O - 1} \nonumber \\
& & - \rho_I^{p_O - 2} - \dots - \rho_I \nonumber \\
& \geq & s(\tau_\ell)\rho_I^{p_O} - \rho_I^{p_O - 1} - \dots - \rho_I \nonumber \\ & & + s(\tau_\ell + 1)\rho_I^{m - \tau_\ell + p_O - 1} - \rho_I^{m - \tau_\ell + p_O - 2} - \dots - \rho_I \label{eqn2}
\end{eqnarray}
Now, let us consider processor $p$ such that $p_O < p \leq \tau_\ell + 1$. We know that there is a processor $p'$ such that $p' \geq p$ and the only task in $T^*_I(p')$ is one of the $p - 2$ largest. Therefore,  $s(p) \geq s(p') \geq t(p - 2) \geq s(p - 2)\rho_I - 1$. For some $\alpha, \beta$, one of $s(\tau_\ell)$ and $s(\tau_\ell + 1)$ can be written in the form of $s(p_O - 1)\rho_I^\alpha - \rho_I^{\alpha - 1} \cdots - 1$, while the other can be written in the $s(p_O)\rho_I^\beta - \rho_I^{\beta - 1} \cdots - 1$. 
Applying these inequalities to (\ref{eqn2}), we know that there are $\alpha',\beta' \leq m - 1$ 
such that 
$s(p_O) + s(p_O + 1)  
\geq s(p_O) \rho_I^{\alpha'} - \rho_I^{\alpha' - 1} - \cdots - \rho_I + s(p_O + 1) \rho_I^{\beta'} - \rho_I^{\beta' - 1} - \cdots - \rho_I$. Since $s(p_O + 1) \geq s(p_O - 1) \rho_I - 1 \geq s(p_O)\rho_I - 1$, we obtain: \label{case5}
\begin{eqnarray*}
0 & \geq & s(p_O)(\rho_I^{\alpha'} - 1) - \rho_I^{\alpha' - 1} - \cdots - \rho_I + s(p_O + 1)(\rho_I^{\beta'} - 1) - \rho_I^{\beta' - 1} - \cdots - \rho_I \\
& \geq & 2\rho_I^{\alpha'} - \rho_I^{\alpha' - 1} - \cdots - \rho_I - 2 + (s(p_O)\rho_I - 1)(\rho_I^{\beta'} - 1) - \rho_I^{\beta' - 2} - \cdots - \rho_I \\
& \geq & 2\rho_I^{\alpha'} - \rho_I^{\alpha' - 1} - \cdots - \rho_I - 2 + 2\rho_I^{\beta' + 1} - \rho_I^{\beta'} - \cdots - \rho_I - 2 + 3 - 2\rho_I \\ 
& = & \mathcal{P}_{\alpha'}(\rho_I) + \mathcal{P}_{\beta' + 1}(\rho_I) + 3 - 2\rho_I.
\end{eqnarray*}
We then have $\rho_I \in (0, \rho_m] \cup [1.5, \infty)$, which contradicts Lemma \ref{sec21} or Lemma~\ref{lem122}.
\end{case}
\begin{case}[$p_L \neq p_O - 1, \tau_L = m + 1$] We then have $T'_I(p) = \{p\}$ for $p \neq p_L$, $T'_I(p_L) = \{p_L, m + 1\}$, $T_I^*(1) = \{\tau_\ell, m\}$, $T_I^*(p) = \{p - 1\}$ for $p \in \{2, \dots, m\} \backslash \{p_O\}$, and $T_I^*(p_O) = \{m + 1, m + 2\}$. Note that $p_L < p_O - 1$  here. By that,
\begin{eqnarray}
& & s(p_L + 1) + s(p_O)  \geq t(p_L) + t(m + 1) + 1 \geq s(p_L) \rho_I \nonumber \\ & \geq & s(1) \rho_I^{p_L} - \rho_I^{p_L - 1} - \cdots - \rho_I \geq (t(\tau_\ell) + t(m)) \rho_I^{p_L} - \rho_I^{p_L - 1} - \cdots - \rho_I \nonumber \\
& \geq & s(\tau_\ell)\rho_I^{p_L + 1} - \rho_I^{p_L} + s(\tau_\ell + 1)\rho_I^{m - \tau_\ell + p_L} - \rho_I^{m - \tau_\ell + p_L - 1} - \dots - \rho_I^{p_L} \nonumber \\
& & - \rho_I^{p_L - 1} - \dots - \rho_I \nonumber \\
& \geq & s(\tau_\ell)\rho_I^{p_L + 1} - \rho_I^{p_L} - \dots - \rho_I \nonumber \\ & & + s(\tau_\ell + 1)\rho_I^{m - \tau_\ell + p_L} - \rho_I^{m - \tau_\ell + p_O - 1} - \dots - \rho_I \label{eqn3}
\end{eqnarray}
Similar to Case \ref{case5}, we have $s(p) \geq s(p - 2) \rho_I - 1$ for $p_O + 1 \leq p \leq \tau_\ell + 1$. Furthermore, $s(p_O - 1) \geq s(p_L + 1)\rho_I^{p_O - p_L - 2} - \rho_I^{p_O - p_L - 3} - \cdots - 1$. By applying those facts to Equation (\ref{eqn3}), we obtain $s(p_L + 1) + s(p_O) \geq s(p_L + 1)\rho_I^\alpha - \rho_I^{\alpha - 1} - \cdots - \rho_I + s(p_O)\rho_I^\beta - \rho_I^{\beta - 1} - \rho_I$ for some $\alpha, \beta < m$. Because $s(p_L + 1) \geq s(p_O) \geq 2$, $2\rho_I^\alpha - \rho_I^{\alpha - 1} - \cdots - \rho_I - 2 + 2\rho_I^{\beta} - \rho_I^{\beta - 1} - \cdots - \rho_I - 2 \leq 0$ and $\mathcal{P}_\alpha(\rho_I) + \mathcal{P}_\beta(\rho_I) \leq 0$. This means $\rho_I \leq \rho_m$, which contradicts the fact that $I$ is minimal.
\end{case}
\begin{case}[$p_L = p_O - 1$ and $\tau_L \leq m$] We then have $T'_I(p) = \{p\}$ for $p \neq p_O - 1$, $T'_I(p_L) = \{p_L, \tau_L \}$, $T_I^*(1) = \{\tau_\ell, \tau_L - 1\}$, $T_I^*(p) = \{p - 1\}$ for $p \in \{2, \dots, m\} \backslash \{p_O\}$, and $T_I^*(p_O) = \{m + 1, m + 2\}$. Recall from Lemma \ref{lem11} that $\tau_\ell \leq \tau_L - 2$. We obtain $t(\tau_L - 1) \geq s(\tau_\ell + 1) \rho_I^{\tau_L - \tau_\ell - 1} - \rho_I^{\tau_L - \tau_\ell - 2} - \cdots - 1$. Similar to Case \ref{case5}, for $p_O + 2 \leq p \leq \tau_\ell + 1$, $s(p) \geq s(p - 2) \rho_I - 1$. Furthermore, $s(p_O + 1) \geq t(p_O - 1) \geq t(p_O) \geq s(p_O) \rho_I - 1$. Thus, for some $\alpha < m$, $t(\tau_L - 1) \geq s(p_O)\rho_I^\alpha - \rho_I^{\rho - 1} - \cdots - 1 \geq 2\rho_I^\alpha - \rho_I^{\rho - 1} - \cdots - 1$. If $p_O = 2$, then $p_L = 1$ and $w_I'(1) \leq s(1)$ by Lemma \ref{lem12}. Otherwise, we have $w_I'(1) < w_I^*(1) \leq s(1)$ unless processor $1$ dominates itself. Then, by Lemma \ref{sec41}, $s(1) \leq 1/(\rho_I - 1)$ and
$\frac{1}{2(\rho_I - 1)} \geq \frac{s(1)}{2} \geq \frac{t(\tau_\ell) + t(\tau_L - 1)}{2} \geq t(\tau_L - 1) \geq 2\rho_I^\alpha - \rho_I^{\alpha - 1} - \cdots - 1.$
Thus, $2\rho_I^{\alpha + 1} - \rho_I^{\alpha} - \cdots - \rho_I - 2 + (3\rho_I - 4)/(2\rho_I - 2) \leq 0$. All solutions of the inequality is no more than $\max\{4/3, \rho_m\}$. However, this contradicts with the fact that $\rho_I > \rho_m \geq 4/3$.
\end{case}
\begin{case} [$p_L \neq p_O - 1, \tau_L \leq m$] By Lemma \ref{lem12}, we know that $w_I'(p_L) \leq s(p_L)$. Because $T_I^*(p_L + 1) = \{p_L\}$, we have $t(p_L) = w_I^*(p_L + 1) \geq w_I^*(p_O)$. Hence, $s(p_L) \geq t(p_L) + t(\tau_L) \geq t(m + 1) + t(m + 2) + t(\tau_L) \geq 3$ and $s(p_L) > 1/(\rho_I - 1)$. By Lemma \ref{sec41}, $w'_I(p_L) > s(p_L)$ which contradicts with the fact obtained from Lemma \ref{lem12}.  \qed

\end{case}
\end{proof}
\section{Tight Approximation Ratios for $3, 4, 5$ Processors}

We show the main result of this paper in this section. The proof begins with the following lemma.
\begin{lemma}\label{lem15}
If $I = (m,n,s,t)$ is minimal, then $n \leq \sum_i t(i) \leq (m - 1)/(\rho_I - 1)$. 
\end{lemma}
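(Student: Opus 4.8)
The plan is to handle the two inequalities in the statement separately; both turn out to be short. For the left-hand inequality $n \le \sum_i t(i)$ no minimality is needed: the normalization $t(1) \ge \cdots \ge t(n) = 1$ forces every task size to be at least $1$, so $\sum_{i=1}^{n} t(i) \ge n$.

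For the right-hand inequality, set $W = \sum_i t(i)$ and sandwich it between two bounds involving $\sum_p s(p)$. First, since $OPT(I) = 1$, the finish time $w_I^*(p)/s(p)$ of each processor in an optimal schedule is at most $1$, so $w_I^*(p) \le s(p)$; summing over $p$ (the sets $T_I^*(p)$ partition all $n$ tasks) gives $W = \sum_p w_I^*(p) \le \sum_p s(p)$. Second, minimality enters through Lemma~\ref{sec21}: $(w_I'(p)+1)/s(p) \ge \rho_I$ for every $p$, equivalently $s(p) \le (w_I'(p)+1)/\rho_I$. Summing this over the $m$ processors and using that the sets $T_I'(p) = T_I(p)\setminus\{n\}$ partition $\{1,\dots,n-1\}$, so that $\sum_p w_I'(p) = W - t(n) = W - 1$, yields $\sum_p s(p) \le (W + m - 1)/\rho_I$.

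Chaining the two bounds gives $W \le (W + m - 1)/\rho_I$, i.e.\ $(\rho_I - 1)W \le m - 1$, and dividing by $\rho_I - 1 > 0$ gives $W \le (m-1)/(\rho_I-1)$ (the case $\rho_I = 1$ is degenerate and the bound is then vacuous).

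I do not expect a genuine obstacle here; it is essentially a warm-up lemma. The one point requiring care is the bookkeeping among $w_I$, $w_I'$, and $w_I^*$: one must use that the primed workloads omit exactly the final task $n$ of size $1$, since this is what turns the naive estimate $\sum_p s(p) \le (W+m)/\rho_I$ into $(W+m-1)/\rho_I$ and hence produces $m-1$ rather than $m$ in the final bound.
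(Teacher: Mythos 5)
Your proof is correct and follows essentially the same route as the paper's: the left inequality from the normalization $t(i)\ge 1$, the right inequality by chaining $\sum_i t(i)\le\sum_p s(p)$ (from $OPT(I)=1$) with the summed form of Lemma~\ref{sec21}, then isolating $\sum_i t(i)$. The only cosmetic difference is that you spell out the partition-and-subtract-$t(n)$ bookkeeping and the degenerate $\rho_I=1$ case a bit more explicitly than the paper does.
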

\begin{proof}
By our assumption that $OPT(I) = 1$, $\sum_i t(i) \leq \sum_p s(p)$. Also, as we have $\rho_I s(p) \leq w_I'(p) + 1$ from Lemma \ref{sec21}, $\rho_I \sum_p s(p) \leq \sum_{i \neq n} t(i) + m$. By the assumption that $t(n) = 1$,
$\rho_I \sum_i t(i) \leq \rho_I \sum_p s(p) \leq \sum_{i \neq n} t(i) + m = \sum_{i} t(i) + m - 1.$
By rearranging the inequality, we obtain the lemma statement. \qed
\end{proof}

We are now ready to prove our main theorem. 
\begin{theorem}
When $m = 3,4,5$, the worst-case approximation ratio of LPT on uniform processors is $\rho_m$.
\end{theorem}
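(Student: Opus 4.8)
We argue by contradiction and reduce everything to the results already proved. Suppose that for some $m \in \{3,4,5\}$ there is an instance on which LPT exceeds ratio $\rho_m$. Since the pairs $(m_I,n_I)$ are well ordered by the lexicographic order, among all instances with at most $m$ processors and ratio greater than $\rho_m$ we may pick one, $I = (m',n',s,t)$, that minimizes first $m'$ and then $n'$. Then $I$ is minimal in the sense of Section~2: for any $J$ with $m_J \le m'$, $n_J \le n'$ and $(m_J,n_J) \ne (m',n')$, the choice of $I$ forces $\rho_J \le \rho_m$, whereas $\rho_I > \rho_m$, so $\rho_J < \rho_I$. An instance with one or two processors cannot beat $\rho_2$ (trivial for one processor, Lemma~\ref{lem1} for two), and $\rho_2 \le \rho_m$, so $m' \ge 3$; hence $m' \in \{3,4,5\}$ and $\rho_I > \rho_{m'}$.

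The quantitative engine is Lemma~\ref{lem15}: $n' \le (m'-1)/(\rho_I-1) < (m'-1)/(\rho_{m'}-1)$, which evaluates numerically to $n' \le 5$ for $m'=3$, $n' \le 6$ for $m'=4$, and $n' \le 8$ for $m'=5$. On the other side, if $n' \le m'$ then the LPT schedule of the first $n'-1 < m'$ tasks leaves some processor $p$ with $T'_I(p) = \emptyset$, so by Lemma~\ref{sec31} either $I$ is not minimal or $\rho_I = 1$, both impossible; hence $n' \ge m'+1$. Thus $n' \in \{m'+1,m'+2\}$ when $m' \in \{3,4\}$, and $n' \in \{m'+1,m'+2,m'+3\}$ when $m'=5$. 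If $n' = m'+1$, Theorem~\ref{thm1} gives $\rho_I \le \rho_{m'}$, contradicting $\rho_I > \rho_{m'}$; if $n' = m'+2$, Theorem~\ref{thm2} says $I$ is not minimal, again a contradiction. This settles $m' \in \{3,4\}$ entirely and leaves only $m' = 5$, $n' = 8$.

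The case $m' = 5$, $n' = 8$ is where the real work lies, since here Lemma~\ref{lem15} forces only $\rho_I \le 3/2$, leaving the short window $(\rho_5, 3/2]$ (the bound $\rho_I \le 3/2$ playing the role Lemma~\ref{lem122} played for $n = m+2$). The plan is to trade this for strong structural rigidity. From $8 \le \sum_i t(i) \le 4/(\rho_I-1)$ and $t(i) \ge 1$ for all $i$, every task has size less than $4/(\rho_5-1) - 7 < 7/4$, and Lemma~\ref{sec21} then gives $s(p) \le (t(i)+1)/\rho_I < 2$ for every processor $p$ with $T'_I(p) = \{i\}$ a singleton. By Lemma~\ref{sec31} the optimal schedule has no empty processor and every $T'_I(p)$ is nonempty, so the per-processor task counts are partitions of $8$, resp.\ $7$, into five positive parts: the optimal schedule is $(4,1,1,1,1)$, $(3,2,1,1,1)$, or $(2,2,2,1,1)$, and the LPT-minus-last-task schedule is $(3,1,1,1,1)$ or $(2,2,1,1,1)$. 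A processor holding two or more tasks optimally has speed $\ge 2$, hence is non-singleton in the latter schedule; since that schedule has at most two non-singleton processors, $(2,2,2,1,1)$ is excluded and $(3,2,1,1,1)$ forces the latter profile to be $(2,2,1,1,1)$, leaving only a handful of configurations.

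For each of these few configurations one finishes exactly as in Sections~3 and 4: use the principle of domination to pin down which tasks sit on the one or two ``doubled'' processors (avoiding self-domination and cross-domination), convert each LPT preference and each optimality bound $w^*_I(p) \le s(p)$ into a linear inequality among $s(1) \ge \cdots \ge s(5)$, and telescope these --- as in the proofs of Theorems~\ref{thm1} and \ref{thm2} --- into an inequality $\mathcal{P}_\alpha(\rho_I) + \mathcal{P}_\beta(\rho_I) \le 0$, possibly with an extra term $3 - 2\rho_I$, for some $\alpha,\beta \le 5$; together with $\rho_I \le 3/2$ this yields $\rho_I \le \rho_5$, the desired contradiction. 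The boundary value $\rho_I = 3/2$ makes Lemma~\ref{lem15} tight, which forces every task to have size exactly $1$ and every speed to be an integer, and in that case LPT is optimal, so $\rho_I = 1$; this is disposed of immediately. The essentially only difficulty in the whole theorem is carrying out this finite case analysis for $m' = 5$, $n' = 8$ cleanly; everything else is a direct assembly of Lemmas~\ref{sec31} and \ref{lem15} with Theorems~\ref{thm1} and \ref{thm2}.
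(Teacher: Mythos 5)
Your reduction down to a single remaining case is exactly the paper's: take the lexicographically $(m',n')$-minimal counterexample, rule out $m' \le 2$ by Lemma~\ref{lem1}, bound $n'$ via Lemma~\ref{lem15}, force $n' \ge m'+1$ via Lemma~\ref{sec31}, and kill $n' = m'+1$ and $n' = m'+2$ with Theorems~\ref{thm1} and~\ref{thm2}. All of that is correct and matches the paper.

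The genuine gap is in the last case $m' = 5$, $n' = 8$, which you recognize as ``where the real work lies'' but then do not finish. You derive the partition profiles of $T^*_I$ and $T'_I$, prune $(2,2,2,1,1)$, and conclude with ``for each of these few configurations one finishes exactly as in Sections~3 and~4'' --- but that sentence is a placeholder, not a proof. It is also not a routine repetition: the telescoping arguments of Theorems~\ref{thm1} and~\ref{thm2} require pinning down \emph{which} task sits on each processor, not merely how many, so your partition-level analysis leaves substantial work undone. The paper avoids the case analysis altogether. It sharpens Lemma~\ref{lem15} to the statement $\sum_{i\in T} t(i) < |T| + 0.89$ for every task set $T$, feeds this into Lemma~\ref{sec21} to show $|T'_I(p)| \ge |T^*_I(p)|$ for \emph{every} processor $p$ (if $|T'_I(p)| < |T^*_I(p)|$ and $|T^*_I(p)|\ge 2$, then $\rho_I \le 1 + 0.89/s(p) \le 1.445 < \rho_5$, a contradiction; the singleton case is trivial), and then sums over $p$ to obtain $7 = n'-1 = \sum_p |T'_I(p)| \ge \sum_p |T^*_I(p)| = n' = 8$. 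Your weaker deduction --- $|T^*_I(p)| \ge 2$ implies $|T'_I(p)| \ge 2$ via the $s(p)<2$ threshold for singletons --- does not close this: the profiles $(4,1,1,1,1)$ and $(3,2,1,1,1)$ against LPT profile $(2,2,1,1,1)$ both survive it. Note also that the per-task bound $t(i) < 7/4$ you isolate is strictly weaker than the aggregate bound the paper uses; applied to a processor with $|T^*_I(p)| = 3$ it only yields $\rho_I \le 3/2$, which is not below $\rho_5$, so it cannot substitute. To repair the proof you should replace the unfinished case analysis by this counting argument (or actually carry out the cases, which will be considerably longer).

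Two small side remarks. First, the closing paragraph about the boundary $\rho_I = 3/2$ forcing integrality and $\rho_I = 1$ is unnecessary once the counting argument is in place, since that argument already caps $\rho_I$ at $1.445$. Second, the paper's definition of $T'_I$ contains a typo ($T^*_I(p)\setminus\{n\}$ should read $T_I(p)\setminus\{n\}$, i.e.\ the LPT assignment without the final task); your reading of $T'_I$ as the LPT schedule minus the last task is the intended one, and both your argument and the paper's rely on it.
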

\begin{proof}
Let $I = (3,n,s,t)$ be an instance with $\rho_I \geq \rho_3 > 1.38$. By Lemma \ref{lem15}, $I$ is not minimal when $n > (3 - 1)/(1.38 - 1) > 5.27$.
We then know that, when $n \geq 6$, there is another instance $I' = (m', n', s', t')$ such that $\rho_{I'} \geq \rho_I$, $m' \leq m$, $n' \leq n$, and $(m',n') \neq (m,n)$. Therefore, to consider the worst approximation ratio when $m = 3$, it is enough to consider instance $I' = (m', n', s', t')$ such that $m' \leq 3$ and $n' \leq 5$. 

It is clear that, for any $I'$ such that $m' = 1$, $\rho_{I'} = 1$. Furthermore, we know from Lemma \ref{lem1} that all $\rho_{I'}$ such that $m' = 2$ gives $\rho_{I'} \leq \rho_2 < \rho_3$. They do not give the worst approximation ratio.
It is then enough to consider instance $I'$ such that $m' = 3$. By Lemma \ref{sec31} and Theorem \ref{thm2}, we know that $I'$ do not give the worst ratio when $n' \leq 3$ or $n' = 5$. The worst ratio is then attained when $n' = 4$. We then can obtain by Theorem \ref{thm1} that the ratio is $\rho_m$.

We can use the same argument for $m = 4$. An instance $I = (4,n,s,t)$ is optimal only if $n \leq (4 - 1)/(\rho_4 - 1) < 6.98$. It is enough to consider instance $I' = (m', n', s', t')$ such that $m' \leq 4$ and $n' \leq 6$. By the previous paragraph, $\rho_{I'} \leq \rho_3 < \rho_4$ when $m' \leq 3$ and, by Lemma \ref{sec31} and Theorem \ref{thm2}, the worst ratio is not attained when $n' \leq 4$ or $n' = 6$. We then can conclude that the ratio is attained when $m' = 4$ and $n' = 5$. By Theorem \ref{thm1}, the ratio is $\rho_m$.

For $m = 5$, because $(5 - 1)/(\rho_5 - 1) < 8.89$, we can use the similar argument to show that the worst ratio is attained when $m = 5$ and $n \in \{6, 8\}$. It is then enough to show that any minimal instance $I = (5, 8, s, t)$ does not give the worst approximation ratio. Assume a contradictory statement that $\rho_I > \rho_m$. Because $8 = n \leq \sum_i t(i) < 8.89$ and $t(i) \geq 1$, for any set of task $T$, $\sum_{i \in T} t(i) < |T| + 0.89.$

Now, let us consider a processor $p$ such that $|T_I^*(p)| \geq 2$. If $|T'_I(p)| < |T^*_I(p)|$, then, by Lemma \ref{sec21}, 
\begin{eqnarray*}1.45 & < & \rho_5 < \rho_I \leq \frac{1 + w'_I(p)}{s(p)}=\frac{1 + \sum_{i \in T'_I(p)}t(i)}{s(p)} \leq \frac{1 + |T'_I(p)| + 0.89}{s(p)} \\ 
& \leq & \frac{1 + |T^*_I(p)|  - 1 + 0.89}{s(p)} \leq 1 + 0.89/2 = 1.445.
\end{eqnarray*}
Therefore, $|T'_I(p)| \geq |T^*_I(p)|$ for $p$ such that $|T_I^*(p)| \geq 2$. For processor $p$ such that $|T_I^*(p)| = 1$, by Lemma \ref{sec31}, we have $|T_I'(p)| \geq 1 \geq |T_I^*(p)|$. Hence, $|T_I'(p)| \geq |T_I^*(p)|$ for all $p$. However, this contradicts the fact that there is one less task in $\sum_p |T_I'(p)|$ than in $\sum_p |T_I^*(p)|$. \qed
\end{proof}

\section{Conclusion}
In this work, we show the tight approximation ratio of the LPT algorithm for $m$ uniform processors when $m = 3,4,5$. On the way to show that, we found several results which give us a deeper understanding of the algorithm. Those results include Lemma \ref{lem15} where we show that the worst approximation ratio is obtained when the number of tasks is small compared to the number of processors, and Theorems \ref{thm1} and \ref{thm2} where we give results for the case when the number of tasks is $m + 1$ and $m + 2$. We believe that those results will play an important role in future analyses of the algorithm. We also found that the analysis for cases with $m + 2$ tasks is much more complicated than that of $m + 1$ tasks. It is very clear that the analysis for $m + 3$ tasks, which would need for having the tight ratio for $m \geq 6$, would be much more complicated than both of the analyses. 
It would be very complicated to use this analysis method for having similar results for larger $m$.
\vspace{0.2cm}

\noindent\textbf{Acknowledgement:} The authors would like to thank Taku Onodera for several useful comments and ideas. 

\vspace{-0.2cm}
\bibliographystyle{splncs04}
\bibliography{ref}

\begin{thebibliography}{10}
\providecommand{\url}[1]{\texttt{#1}}
\providecommand{\urlprefix}{URL }
\providecommand{\doi}[1]{https://doi.org/#1}

\bibitem{chen1991parametric}
Chen, B.: Parametric bounds for {LPT} scheduling on uniform processors. Acta
  Mathematicae Applicatae Sinica  \textbf{7}(1),  67--73 (1991)

\bibitem{coffman1978application}
Coffman, Jr, E.G., Garey, M.R., Johnson, D.S.: An application of bin-packing to
  multiprocessor scheduling. SIAM Journal on Computing  \textbf{7}(1),  1--17
  (1978)

\bibitem{dobson1984scheduling}
Dobson, G.: Scheduling independent tasks on uniform processors. SIAM Journal on
  Computing  \textbf{13}(4),  705--716 (1984)

\bibitem{friesen1987tighter}
Friesen, D.K.: Tighter bounds for {LPT} scheduling on uniform processors. SIAM
  Journal on Computing  \textbf{16}(3),  554--560 (1987)

\bibitem{garey1978strong}
Garey, M.R., Johnson, D.S.: {``strong''}np-completeness results: Motivation,
  examples, and implications. Journal of the ACM (JACM)  \textbf{25}(3),
  499--508 (1978)

\bibitem{ghalami2019scheduling}
Ghalami, L., Grosu, D.: Scheduling parallel identical machines to minimize
  makespan: A parallel approximation algorithm. Journal of Parallel and
  Distributed Computing  \textbf{133},  221--231 (2019)

\bibitem{gonzalez1977bounds}
Gonzalez, T., Ibarra, O.H., Sahni, S.: Bounds for {LPT} schedules on uniform
  processors. SIAM journal on Computing  \textbf{6}(1),  155--166 (1977)

\bibitem{graham1966bounds}
Graham, R.L.: Bounds for certain multiprocessing anomalies. Bell system
  technical journal  \textbf{45}(9),  1563--1581 (1966)

\bibitem{graham1969bounds}
Graham, R.L.: Bounds on multiprocessing timing anomalies. SIAM journal on
  Applied Mathematics  \textbf{17}(2),  416--429 (1969)

\bibitem{hochbaum1986polynomial}
Hochbaum, D.S., Shmoys, D.B.: A polynomial approximation scheme for machine
  scheduling on uniform processors: using the dual approximation approach. In:
  FSTTCS'86. pp. 382--393. Springer (1986)

\bibitem{jansen2020approximation}
Jansen, K., Lassota, A., Maack, M.: Approximation algorithms for scheduling
  with class constraints. In: SPAA'20. pp. 349--357 (2020)

\bibitem{kovacs2009tighter}
Kov{\'a}cs, A.: Tighter approximation bounds for {LPT} scheduling in two
  special cases. Journal of Discrete Algorithms  \textbf{7}(3),  327--340
  (2009)

\bibitem{kovacs2010new}
Kov{\'a}cs, A.: New approximation bounds for {LPT} scheduling. Algorithmica
  \textbf{57}(2),  413--433 (2010)

\bibitem{sahni1976algorithms}
Sahni, S.K.: Algorithms for scheduling independent tasks. Journal of the ACM
  (JACM)  \textbf{23}(1),  116--127 (1976)

\bibitem{suda2006}
Suda, R.: A survey on task scheduling for heterogeneous parallel computing
  environments (in japanese). IPSJ Trans. on Advanced Computing Systems
  \textbf{47}(SIG 18),  92--114 (2006)

\end{thebibliography}

\end{document}